\documentclass[11pt]{article}
\usepackage{fullpage}
\usepackage{amsfonts,amsmath,amsthm,amssymb,mathtools,mathrsfs}
\usepackage{graphicx,upgreek,bm,xcolor}
\usepackage{booktabs}
\usepackage{algorithm}
\usepackage[noend]{algpseudocode}
\usepackage{float}
\usepackage{microtype}
\usepackage{enumitem}
\usepackage{aliascnt}
\usepackage{hyperref}
\usepackage{cleveref}
\hypersetup{hidelinks}
\newcommand{\Ber}{\mathrm{Ber}}

\makeatletter
\providecommand{\theHALG@line}{}
\renewcommand{\theHALG@line}{\thealgorithm.\arabic{ALG@line}}
\makeatother

\let\oldtheorem\newtheorem
\RenewDocumentCommand{\newtheorem}{s m o m O{}}{%
\IfBooleanTF{#1}%
{\oldtheorem{#2}{#4}}%
{\IfNoValueTF{#3}{\oldtheorem{#2}{#4}[#5]}%
{\newaliascnt{#2}{#3}%
\oldtheorem{#2}[#2]{#4}%
\aliascntresetthe{#2}}}}

\newcommand{\citet}{\cite}
\newcommand{\citep}{\cite}
\allowdisplaybreaks

\newtheorem{theorem}{Theorem}[section]
\newtheorem{lemma}[theorem]{Lemma}
\newtheorem{proposition}[theorem]{Proposition}
\newtheorem{corollary}[theorem]{Corollary}

\theoremstyle{definition}
\newtheorem{definition}[theorem]{Definition}
\newtheorem{remark}[theorem]{Remark}

\crefname{algorithm}{Algorithm}{Algorithms}
\crefname{section}{Section}{Sections}
\crefname{subsection}{Section}{Sections}

\newcommand{\R}{\mathbb{R}}
\newcommand{\E}{\mathbb{E}}

\newcommand{\one}{\mathbf{1}}

\newcommand{\cI}{\mathcal{I}}

\newcommand{\EA}{\mathsf{EA}}
\newcommand{\pos}[1]{\left(#1\right)^{+}}
\newcommand{\restr}[2]{#1\!\mid_{#2}}
\newcommand{\scr}{\mathcal}
\newcommand{\mb}{\mathbb}
\newcommand{\bP}{\mathbb{P}}
\newcommand{\Rev}{\mathsf{Rev}}
\newcommand{\Util}{\mathsf{Util}}

\newcommand{\I}{\mathcal{I}}
\newcommand{\B}{\mathcal{B}}
\newcommand{\cl}{\operatorname{cl}}
\newcommand{\conv}{\operatorname{conv}}
\newcommand{\argmax}{\operatorname*{argmax}}

\newcommand{\cM}{{\mathcal M}}
\newcommand{\cN}{{\mathcal N}}
\newcommand{\bx}{{\mathbf x}}
\newcommand{\bt}{{\mathbf t}}
\newcommand{\bv}{{\mathbf v}}
\newcommand{\btau}{\boldsymbol{\tau}}
\newcommand{\bp}{{\mathbf p}}
\newcommand{\by}{{\mathbf y}}
\newcommand{\bz}{{\mathbf z}}
\newcommand{\bu}{{\mathbf u}}

\newcommand{\bg}{{\mathbf g}}

\newcommand{\bdelta}{\boldsymbol{\delta}}
\newcommand{\brho}{\boldsymbol{\rho}}

\title{Prophet Inequalities and Online Contention Resolution for Matchoids}
\author{
Calum MacRury
\thanks{Georgia Tech, School of Industrial and Systems Engineering (ISyE), 
\texttt{calum.macrury@isye.gatech.edu}}
\and
Pranav Nuti
\thanks{University of Chicago, Booth School of Business, \texttt{pranavn@stanford.edu}}
\and
Jan Vondr\'ak
\thanks{Stanford University, Department of Mathematics, \texttt{jvondrak@stanford.edu}}
}
\date{}

\begin{document}
\maketitle

\begin{abstract}
In the classical prophet inequality, an algorithm observes a sequence of random variables with known distributions in an online fashion, and it must select one of the random variables with the goal of maximizing the expected value of its selection. The performance of the algorithm is compared to an \textit{omniscient prophet} who observes all of the random variables before having to make its selection. Combinatorial extensions of the classical prophet inequality in which the algorithm gets to pick a subset of the random variables (constrained to belong to some family of feasible sets) have been studied extensively.

We study prophet inequalities with a $k$-matchoid constraint (a common generalization of a $k$-matroid intersection constraint and a $k$-bounded hypergraph matching constraint) in two common online arrival models. We give guarantees with respect to the \textit{ex-ante} fractional relaxation of the omniscient prophet, obtaining an ex-ante competitive ratio of $\frac{1}{k+1}$ in the adversarial order case, and $\frac{1-e^{-k}}{k}$ in the random order case. Using the duality framework of Lee and Singla \cite{Lee2018}, this also yields online contention resolution schemes in these settings.

Our adversarial-order prophet inequality can be viewed as a generalization of a recent $\frac12$-competitive matroid prophet inequality by Kalantarzadeh and Pashkovich \cite{KP2026}. This generalization introduces a new framework: coordinated weighted principal partitions across multiple matroids. Our random-order prophet inequality is a generalization of the $k=1$ matroid case of Lee and Singla \cite{Lee2018}. The two results improve previously known competitive ratios for $k$-matroid intersection, which were $\frac{1}{(e+o(1))k}$ and $\frac{1}{k+1}$, respectively.
\end{abstract}

\newpage

\section{Introduction} \label{sec:introduction}

Prophet inequalities are a way to compare the performance of online and offline algorithms for various combinatorial optimization problems. Unlike worst-case online arrival models where the input sequence is chosen by an adversary, in prophet inequality problems, the online algorithm has pre-existing knowledge of the distribution on the elements of the input sequence. In recent years, there has been a significant increase in research on prophet inequalities, in part due to their applications in designing posted-price mechanisms in auctions \cite{ChawlaHMS10,DuttingFKL20,BHKKO24}, as well as to stochastic probing (see for example \cite{GuptaN13, GuptaNS16, GuptaNS17}).

The input to an algorithm for a prophet inequality problem is specified by a collection of elements $N$, whose weights $(W_i)_{i \in N}$ are drawn independently
from known distributions. Initially, the online algorithm does not know the instantiation of any of the weights. Instead, an adversary decides upon an ordering of the elements of $N$, and when element $i$ arrives, the value of $W_i$ is revealed to the algorithm. The algorithm must then irrevocably accept or reject $i$, where accepting $i$ gains a reward of $W_i$. In the classical ``single-item'' version of the problem, the online algorithm can select at most one element \cite{krengel1977semiamarts}.

More generally, we may consider a downward-closed set system on $N$ indicating which subsets of elements are feasible, i.e., simultaneously selectable. For instance, well-studied feasibility systems include independent sets of matroids, matchings, and knapsack constraints. The goal for the algorithm is to select a feasible subset $I \subseteq N$, for which the value of the (cumulative) reward, namely $\sum_{i \in I} W_i$, is as large as possible in expectation. The benchmark is the ``prophet'', who selects a feasible subset given full knowledge of the instantiations of the weights, and so maximizes the expected reward of their chosen subset. 

An online algorithm is $\alpha$-\textit{competitive}, provided its expected reward is an $\alpha$-fraction of the prophet's expected reward. The prophet \textit{secretary} problem is a related variant where the only difference is that the elements' weights are revealed in a uniformly random order, as opposed to an adversarial order. Designing an online algorithm is easier in this variant, as it gets to average over the arrival orders in its expected reward.

A common approach used to design a competitive online algorithm for a prophet inequality problem is to relax the prophet benchmark via a convex program, called the {\em ex-ante prophet benchmark}, and to try and obtain at least an $\alpha$-fraction of this benchmark (we formalize the program in \Cref{sec:ex-ante-benchmark}). This leads to a strictly stronger guarantee on the algorithm, and such an algorithm is often called $\alpha$-\textit{ex-ante competitive}. We call this guarantee an $\alpha$-ex-ante prophet inequality. 

One of the advantages of ex-ante prophet inequalities is that the hardest instances are more easily tractable. Specifically, it suffices to consider the case when each $W_i$ is a weighted Bernoulli random variable that is activated with probability $x_i$, in which case it takes a deterministic weight $v_i \ge 0$. Here $x_i$ should be thought of as the probability with which $i$ is selected by the prophet, and so we can also assume that the fractional solution $ \bx = (x_i)_{i \in N}$ is feasible in expectation (i.e., $\bx$ lies in the convex hull of all feasible subsets of $N$). In this special setting, the value of the convex program is $\sum_{i \in N} v_i x_i$, and so it suffices to design an online algorithm
whose expected reward is at least $\alpha \cdot \sum_{i \in N} v_i x_i$ to get an ex-ante prophet inequality.

A particularly effective way of doing this is via a randomized rounding tool known as an \textit{online contention resolution scheme (OCRS)}.
In the OCRS problem, each element $i$ of $N$ is \textit{active} independently with probability $x_i$, and
the active elements are revealed sequentially in some order. Upon learning that $i$ is active, the OCRS must irrevocably select/reject $i$, subject to feasibility constraints. The goal of the OCRS is to ensure each element $i$ is selected with probability $\alpha x_i$ for $\alpha$ as large as possible. In this case, it is called $\alpha$-\textit{selectable} on input $\bx$. By linearity of expectation,
an $\alpha$-selectable OCRS on input $\bx$ implies an online algorithm with expected reward $\alpha \sum_{i \in N} v_i x_i$, as formalized by \cite{feldman2015online}.
A similar reduction holds
for the prophet secretary problem, where one can instead use a \textit{random-order} contention resolution scheme (RCRS), which is an OCRS that processes the elements in random order.

In certain settings, explicit OCRSs with optimal selectability were not known, despite the existence of optimal prophet inequality algorithms (designed using other tools). For instance, for matroid constraints, \cite{kleinberg2012matroid} derived an optimal $1/2$-competitive algorithm via adaptive prices, despite the best explicit matroid OCRS selectability guarantee being $1/4$ \cite{feldman2015online}. This fact, together with the broader applicability of OCRSs as a rounding tool, motivated Lee and Singla \citet{Lee2018} to prove a converse to the reduction of \cite{feldman2015online}. Using LP duality, they showed that an $\alpha$-\textit{ex-ante} prophet inequality implies the existence of an $\alpha$-selectable OCRS. With this reduction, they applied the pricing-based approach of \cite{kleinberg2012matroid} to design an optimal $1/2$-selectable OCRS for matroids.

In this paper, we revisit the idea of designing contention resolution schemes using ex-ante prophet inequalities, but for more complicated constraint systems, namely intersections of $k$ matroids, and more generally $k$-matchoids (which generalize intersections of $k$ matroids as well as $k$-bounded hypergraph matching).

In the setting of $k$-matroid intersection,
\cite{kleinberg2012matroid} gave a $\frac{1}{4k-2}$-competitive prophet inequality, and
\cite{feldman2015online} gave a $\frac{1}{k+1}\left(1-\frac{1}{k+1}\right)^k = \frac{1}{(e+o(1))k}$-selectable
OCRS. In the random-order setting, \cite{adamczyk2018random} designed
a $\frac{1}{k+1}$-selectable RCRS for the intersection of $k$ matroids. On the other hand, for $k$-bounded hypergraph matchings, $1/(k+1)$ and $(1- e^{-k})/k$
are the ``baseline'' selectability guarantees for OCRSs and RCRSs, respectively, due to \cite{ezra2020online} and \cite{brubach2024offline}.\footnote{These papers focus on graph matchings (i.e., $k=2$), but the baseline arguments extend to $k$-bounded hypergraph matching; see \cite{MaMZ26Network} for details.} These latter results ought to ideally be true for $k$-matroid intersection, and more generally, for $k$-matchoids. We prove that this is indeed the case by providing ex-ante prophet inequalities with these respective ex-ante competitive ratios.

\subsection{Our Results}

A $k$-matchoid on a set of elements $N$ consists of matroids $\scr{M}_h=(N_h,\scr{I}_h)$ indexed by $h\in H$, where
$\bigcup_hN_h=N$ and each element $i \in N$ appears in at most $k$ of these matroids; that is, 
$
|\{h \in H:i\in N_h\}|\le k.
$
The family of feasible sets is
\[
    \scr{F}:=\{A\subseteq N:A\cap N_h \text{ is independent in } \scr{M}_h \text{ for every } h\in H\}.
\]
We denote the convex hull of the indicator vectors of the sets in $\scr{F}$ by $P(\scr{F})$.

Our first main result is an ex-ante prophet inequality for $k$-matchoids (and as a special case, for intersections of $k$ matroids):
\begin{theorem}\label{thm:ex-ante-adversarial}
There exists an efficient prophet inequality algorithm for $k$-matchoids which is $\frac{1}{k+1}$-ex-ante competitive. 
\end{theorem}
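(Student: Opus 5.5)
We reduce, as described in the introduction, to the weighted Bernoulli setting. Each element $i$ is active independently with probability $x_i$, has deterministic weight $v_i\ge 0$, and $\bx\in P(\scr{F})$. It then suffices to design an online algorithm with expected reward at least $\frac{1}{k+1}\sum_i v_i x_i$. Since $\bx\in P(\scr{F})$, its restriction $\bx|_{N_h}$ lies in the matroid polytope of $\scr{M}_h$ for every $h$.

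The plan is to generalize the price-based argument of Kalantarzadeh--Pashkovich for one matroid, which achieves $\frac12$, to $k$ matroids via a ``balanced'' charging. For a single matroid, one computes the weighted principal partition of $\bx$. This is a chain of sets $\emptyset=S_0\subset S_1\subset\dots\subset S_m=N$ such that, on each layer $S_j\setminus S_{j-1}$ of the contracted matroid, $\bx$ is ``uniformly dense'': its rank equals $x(S_j\setminus S_{j-1})/\lambda_j$ for a density $\lambda_j\le 1$. Thresholds are then set per layer. For the $k$-matchoid case, we would compute such a partition for each $\scr{M}_h$, but coordinated across the matroids. Each element $i$ receives a price
\[
p_i=\sum_{h\ni i} p_i^h ,
\]
where $p_i^h$ is the expected marginal loss in matroid $h$ caused by accepting $i$. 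The algorithm accepts an active $i$ if $v_i\ge p_i$ and $i$ is feasible. Alternatively, it accepts with a probability chosen so that each matroid's ``residual density'' evolves in a controlled way.

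The key accounting is the standard prophet-inequality split $\mathrm{ALG}=\text{Revenue}+\text{Utility}$. Utility is at least $\sum_i x_i(v_i-p_i)^+\cdot\Pr[i\text{ feasible on arrival}]$. Revenue is the sum, over matroids $h$, of the price mass collected in $h$. The coordinated principal partition is used for two purposes. The first is to show that, for each matroid $h$, the total price $\sum_{i}x_i p_i^h$ charged in $h$ is at most the revenue collected in $h$; this gives a per-matroid ``revenue $\ge$ blocked value'' inequality in the Kleinberg--Weinberg style. The second is to ensure that an element whose value exceeds its total price is blocked only by selections it has paid for.

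Summing over the at most $k$ matroids containing each element should give
\[
\sum_i x_i v_i\le \sum_i x_i p_i+\sum_i x_i (v_i-p_i)^+\le k\cdot\text{Revenue}+\text{Utility}+\text{Revenue},
\]
which yields the factor $\frac{1}{k+1}$. Since each element lies in at most $k$ matroids, its price is split across at most $k$ matroids, and the $k$ in this bound comes from that split. Efficiency follows because principal partitions can be computed by submodular minimization, and the needed probabilities can be computed or estimated from the partitions.

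The main obstacle is defining the coordinated partitions so that the per-matroid prices are consistent. A price $p_i^h$ must reflect the loss that $i$ causes in matroid $h$, and this loss depends on the algorithm's randomness in all the other matroids. A naive product of independent per-matroid analyses loses the $(1-\frac1{k+1})^k$ factor of Feldman et al. To avoid this, I would first try an iterative fixed-point construction: weight the elements of $\scr{M}_h$ by their acceptance probabilities under the current prices, recompute the weighted principal partition of each matroid, and repeat. A monotonicity/potential argument would then show that this process converges to a state in which every matroid is simultaneously ``tight'' in the sense needed for the per-matroid revenue inequality.
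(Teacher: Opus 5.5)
Your outline has the right general shape: per-matroid prices summed into a threshold, blocked value paid for by collected prices, and a fixed point coordinating the matroids. But the step that carries all the weight is only hoped for. You never specify prices for which your two claimed inequalities hold, and the ``iterative fixed-point construction'' with an unspecified monotonicity/potential argument is exactly the missing idea; neither its convergence nor its efficiency is argued. The paper gets the coordination from one convex program. Take $(\bt_h)_{h\in H}\ge0$ minimizing $\sum_h\Gamma_{\scr M_h}(\bt_h)+\frac12\sum_i x_i(v_i-\tau_i)^2$, where $\Gamma_{\scr M}(\bt)=\frac12\max_{B\in\B(\scr M)}\sum_{i\in B}t_i^2$ and $\tau_i=\sum_{h\in H(i)}t_{h,i}$. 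First-order optimality shows that the surplus $\bp=\bx\odot(\bv-\btau)$ is nonnegative and is simultaneously a subgradient of every $\Gamma_{\scr M_h}$ at $\bt_h$. By \Cref{lem:principal-blocks}, the level sets of $\bt_h$ then form a principal partition weighted by this surplus (not by $\bx$, and not by acceptance probabilities), with block densities equal to the prices. That one fact gives both inequalities the analysis needs: $p(S)\le\alpha_{h,j}r_{\scr N_{h,j}}(S)$ for $S\subseteq U_{h,j}$, and the budget $\restr{\bx}{N_h}\cdot\bt_h\le p(N_h)$.

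Two parts of your setup would also fail as written. First, you test feasibility in the original matroids, but the charging of blocked surplus works block by block. An element of $U_{h,j}$ blocked in the strengthening $\widehat{\scr M}_h$ lies in $\cl_{\scr N_{h,j}}(I\cap U_{h,j})$, which gives $p(C_{h,j})\le\alpha_{h,j}|I\cap U_{h,j}|$. In $\scr M_h$ itself, an element of a high-price block can be spanned by accepted elements of cheaper blocks that pay less than $\alpha_{h,j}$, and then $p(C)\le\tau(I)$ breaks; the algorithm must use $\widehat{\scr M}_h$. Second, in the paper's notation (your prices $p_i$ are its $\tau_i$, and $\bp$ is the surplus), your accounting charges the price term to revenue, $\sum_i x_i\tau_i\le k\cdot\Rev$. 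The source of $k$ is inconsistent: summing your per-matroid bounds $\sum_i x_it_{h,i}\le\Rev_h$ just returns the total revenue, with no factor $k$. This charge-to-revenue bound is also stronger than needed and fails for the paper's thresholds. For $U_{1,2}$ (so $k=1$) with $x_1=x_2=\frac12$ and $v_1=v_2=1$, one gets $\tau_1=\tau_2=\frac12$, so $\bx\cdot\btau=\frac12$ while the expected revenue is only $\frac38$. The correct bookkeeping goes through the total surplus $P=p(N)$. Revenue pays for blocked surplus, so $\E[v(I)]\ge P$. The local budgets give $\bx\cdot\btau\le\sum_h p(N_h)=\sum_i d(i)p_i\le kP$, where $k$ arises because the common surplus $p_i$ is counted once in each of the at most $k$ local budgets containing $i$. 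Hence $\sum_i x_iv_i=P+\bx\cdot\btau\le(k+1)P$.
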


As mentioned in \Cref{sec:introduction}, in order to establish \Cref{thm:ex-ante-adversarial}, it suffices to consider the special case when each $W_i$ is distributed as $v_i \cdot \Ber(x_i)$ \footnote{Here $v_i \cdot \Ber(x_i)$ denotes a $\{0,v_i\}$-valued random variable, which takes value $v_i$ with probability $x_i$.}, where $v_i \ge 0$ is a non-negative weight,
and $\bx =(x_i)_{i \in N} \in P(\scr{F})$
(this reduction is implicitly proven in \cite{Lee2018}, as we expand upon in \Cref{sec:ex-ante-benchmark}).

Despite this greatly simplifying the inputs we must consider to prove \Cref{thm:ex-ante-adversarial}, the convex hull of the indicator vectors of the sets in $\scr{F}$ is difficult to work with. Instead, we work with the following natural relaxation:
\[
    \scr{P}_{\cap}
      :=\{\bx\in[0,1]^N:\restr{\bx}{N_h}\in P(\scr{M}_h)
        \text{ for every }h\},
\]
where $P(\scr{M}_h)$ is the convex hull of the indicator vectors of independent sets of $\scr{M}_h$. This relaxation is efficiently separable, and it is clear that $P(\scr{F}) \subseteq \scr{P}_{\cap}$ (typically with a strict containment). We establish our results by working with weighted Bernoulli random variables $W_i = v_i \cdot \Ber(x_i)$, and $\bx \in \scr{P}_{\cap}$, thereby obtaining results stronger than necessary for an ex-ante prophet inequality.

This lets us use the LP duality arguments of \cite{Lee2018} to obtain the following corollary (see \Cref{sec:extended_prelim} for a detailed derivation): 

\begin{corollary}
\label{cor:ocrs}
For every $k$-matchoid and every
$\bx\in\scr{P}_{\cap}$, there exists a $\frac{1}{k+1}$-selectable online contention resolution scheme (OCRS) on input $\bx$. Moreover, for each $\varepsilon > 0$, a $\left( \frac{1}{k+1} - \varepsilon \right)$-selectable OCRS can be designed efficiently.
\end{corollary}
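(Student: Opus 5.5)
The plan is to derive \Cref{cor:ocrs} from \Cref{thm:ex-ante-adversarial}, or more precisely from the stronger form established in the paper, via the LP duality reduction of Lee and Singla. That stronger form says the following. For every $\bx\in\scr{P}_{\cap}$ and every weight vector $\bv\ge 0$, when $W_i=v_i\cdot\Ber(x_i)$ arrive in adversarial order, some online algorithm obtains expected reward at least $\frac{1}{k+1}\sum_i v_i x_i$.

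Fix $\bx\in\scr{P}_{\cap}$ and consider the set $\mathcal{S}\subseteq[0,1]^N$ of vectors $\bp$ with $p_i=\bP[\text{$i$ selected}]$ achievable by some OCRS on input $\bx$. Here we allow randomized OCRSs, and the adversary's order is taken to be fixed in advance, or handled via the min-max over orders as in Lee and Singla.

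\textbf{Step 1: $\mathcal{S}$ is convex and compact.} A randomized OCRS can mix between schemes, which gives convexity. The space of deterministic schemes for a fixed order is finite, since decisions depend only on the finite history of active and selected elements. So $\mathcal{S}$ is a polytope, namely the convex hull of finitely many vectors.

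\textbf{Step 2: apply separation.} We want to show $\frac{1}{k+1}\bx$ is dominated by some point of $\mathcal{S}$; an element can always be rejected, so selection probabilities can be scaled down element-wise, and $\mathcal{S}$ is down-closed in the appropriate sense. Suppose not. By the separating hyperplane theorem applied to the down-closure of $\mathcal{S}$, there is $\bv\ge 0$ with $\sum_i v_i p_i<\frac{1}{k+1}\sum_i v_i x_i$ for all $\bp\in\mathcal{S}$. But the algorithm from \Cref{thm:ex-ante-adversarial} on weights $v_i\cdot\Ber(x_i)$ only ever selects active elements and stays feasible, so it is an OCRS. Its reward equals $\sum_i v_i p_i$, which is at least $\frac{1}{k+1}\sum_i v_i x_i$, a contradiction. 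An adversarial order is handled by noting the guarantee holds for every order. For the almighty or adaptive adversary variants, one would repeat the argument per order, or invoke Lee and Singla's formulation directly.

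\textbf{Step 3: efficiency.} For the $\varepsilon$-version, I would solve the LP $\max\{\alpha : \alpha\bx\le\bp,\ \bp\in\mathcal{S}\}$ approximately with the ellipsoid method or multiplicative weights. The separation/oracle step is exactly: given $\bv$, produce an OCRS maximizing or approximating $\sum v_ip_i$. The efficient algorithm of \Cref{thm:ex-ante-adversarial} supplies a $\frac{1}{k+1}$-approximate oracle. Running multiplicative weights over elements for $O(\log|N|/\varepsilon^2)$ rounds, with weights proportional to current deficiencies $v_i\propto \exp(-\eta\, p_i/x_i)$, and outputting the uniform mixture of the resulting algorithms then gives selectability $\frac{1}{k+1}-\varepsilon$. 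A minor point here is that $p_i$ must be estimated by sampling rather than computed exactly; concentration bounds absorb this into $\varepsilon$.

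The main obstacle I expect is Step 3: making the min-max argument constructive with only an approximate oracle, and controlling sampling error for elements with tiny $x_i$. I would first try to handle small $x_i$ by normalizing with $p_i/x_i$ in the MWU losses, so every element's relative error is bounded, and by discarding elements whose contribution is below $\varepsilon/|N|$.
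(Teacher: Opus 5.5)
Your Steps 1 and 2 are the paper's argument. The paper obtains \Cref{cor:ocrs} by feeding the weighted Bernoulli guarantee, valid for every $\bx\in\scr P_\cap$ and every $\bv\ge0$ (\Cref{thm:random-element-matchoid} with singleton batches), into the pointwise Lee--Singla reduction \Cref{cor:weighted-reduction}. That reduction is exactly your separation argument on the down-closure of the achievable selection vectors. You are also right to use the $\scr P_\cap$ form of the guarantee rather than the $P(\scr F)$ benchmark.

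The route differs in Step 3. The paper cites the Lee--Singla ellipsoid argument, which it states under the hypothesis that the policy's selection probabilities can be evaluated in polynomial time; the paper does not separately check this for \Cref{alg:adversarial-prophet}. Your multiplicative-weights scheme needs only simulation and outputs an explicit uniform mixture of $O(\log|N|/\varepsilon^2)$ threshold policies. Three adjustments make it go through.

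\begin{enumerate}
\item Keep a distribution $w^t_i\propto\exp(-\eta\sum_{s<t}\hat g^s_i)$ over elements, with cumulative estimated gains, and run the prophet algorithm with $v_i=w^t_i/x_i$. Its guarantee then reads $\sum_i w^t_i g^t_i\ge\frac1{k+1}$ for the true ratios $g^t_i=\bP[i\in I]/x_i\in[0,1]$. If you instead take $v_i$ proportional to the MWU weights, you are running Hedge with prior proportional to $\bx$, and the number of rounds picks up a $\ln(1/\min_i x_i)$ dependence.
\item Estimate $g^t_i$ directly rather than $\bP[i\in I]$. For the threshold rule, $g^t_i=\bP[A_i]$ when $v_i>\tau_i$ and $g^t_i=0$ otherwise, and availability does not involve $X_i$. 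So $O(\varepsilon^{-2}\log(|N|T/\delta))$ simulated runs give additive accuracy for every $i$ at once, however small $x_i$ is. (For a general policy, simulate with $X_i$ forced to $1$.) Naive sampling of the indicator of $i\in I$ would instead need on the order of $1/x_i$ runs.
\item Drop the idea of discarding low-contribution elements. Selectability is a per-element requirement: every $i$ with $x_i>0$ must get $\bP[i\in I]\ge(\frac1{k+1}-\varepsilon)x_i$.
\end{enumerate}

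Finally, the failure probability $\delta$ of the estimates can be absorbed by letting the OCRS carry out the randomized construction itself, at the cost of a factor $1-\delta$.
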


This corollary improves the previously known selectability guarantee of $\frac{1}{(e+o(1)) k}$ for online contention resolution schemes for intersections of $k$ matroids. As we explain in \Cref{sec:random-element-ocrs}, the corollary also extends to a more general \textit{batched/random-element} OCRS setting where there is negative correlation amongst which elements are active. This batched setting was introduced in \cite{MaMZ26Network} for $k$-bounded hypergraph matchings to design randomized rounding based online algorithms for network revenue management and online combinatorial auctions. In this model, \cite{MaMZ26Network} proved $1/(1+k)$ is tight for $k$-bounded hypergraph matching when $k$ is a prime power, and so since our result applies to $k$-matchoids, it is tight for us as well. 

Our second contribution is an ex-ante prophet inequality for $k$-matchoids 
under random-order arrivals:

\begin{theorem} \label{thm:ex-ante-secretary}
    There exists an efficient prophet secretary algorithm for $k$-matchoids
    which is $\left(\frac{1-e^{-k}}{k}\right)$-ex-ante competitive.
\end{theorem}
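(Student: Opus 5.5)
This is a plan for Theorem \ref{thm:ex-ante-secretary}. As in the adversarial case, I first reduce to weighted Bernoulli inputs $W_i = v_i\cdot\Ber(x_i)$ with $\bx\in\scr{P}_{\cap}$. I then design a random-order algorithm whose expected reward is at least $\frac{1-e^{-k}}{k}\sum_i v_i x_i$. Since the abstract presents this result as a generalization of the $k=1$ matroid case of Lee and Singla \cite{Lee2018}, I would follow their approach.

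\textbf{Time-dependent exchange argument.} Model random order by giving each element an independent arrival time $t_i\sim U[0,1]$. The algorithm is greedy with respect to an offline optimal structure. Let $R$ be a random sample in which each $i$ is included independently with probability $x_i$; its active elements carry weights $v_i$. For each matroid $\scr{M}_h$, fix a max-weight basis $\mathrm{OPT}_h(R)$ of $R\cap N_h$.

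The algorithm accepts an arriving active element $i$ (arriving at time $t$) only if two conditions hold:
\begin{itemize}[leftmargin=*]
\item $i$ would belong to the greedy/max-weight structure of each matroid $h\ni i$, evaluated on the sample made of already-arrived real elements together with "future" elements drawn fresh;
\item adding $i$ keeps the current set feasible in every $\scr{M}_h$.
\end{itemize}

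The analysis is a per-matroid charging. In matroid $h$, a candidate $i$ is blocked only if some element already selected spans it in $\scr{M}_h$. I use the standard matroid fact that $x|_{N_h}\in P(\scr{M}_h)$ implies expected rank bounds of the form $\E[\text{rank of selected up to time }t]\le t\cdot(\ldots)$. This should give
\[
\Pr[i\text{ blocked in }h\text{ at time }t]\le\int_0^t \rho_h(s)\,ds,
\]
where $\rho_h(s)$ is the weighted rate of selections in $\scr{M}_h$.

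\textbf{Differential inequality.} Summing the blocking bound over the at most $k$ matroids containing $i$ gives the key step. Let $F(t)$ be the (value-weighted) selection rate at time $t$. The plan is to show
\[
F(t)\ge \sum_i v_ix_i-k\int_0^t F(s)\,ds .
\]
Writing $G(t)=\int_0^t F(s)\,ds$, this reads $G'\ge V-kG$. Solving gives
\[
G(1)\ge V\cdot\frac{1-e^{-k}}{k}, \qquad V=\sum_i v_i x_i .
\]
So the entire proof hinges on establishing the blocking inequality with exactly coefficient $k$ in front of $\int_0^t F$.

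\textbf{Main obstacle.} Each selected element counts against each of its up to $k$ matroids. The charging must therefore use weights, not just counts: blocking in $h$ must be charged to selected elements of value at least $v_i$. This is where a weighted principal partition, or the Lee--Singla threshold $\max$-weight structure, is needed. For each $h$ separately, and for each threshold $\tau$, I would bound the expected number of selected elements of weight $\ge\tau$ that span higher-than-$\tau$ candidates by $t$ times the $\bx$-mass above $\tau$. I would then integrate over $\tau$ to convert counts into values.

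The first thing I would try is to reproduce the Lee--Singla inequality for a single matroid with $\bx\in P(\scr{M}_h)$, and then check that it survives when other matroids remove some candidates. Removal only lowers the selection mass, which is the direction the bound needs.
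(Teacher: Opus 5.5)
\textbf{There is a genuine gap.} It sits exactly at the step you say everything hinges on. The inequality $F(t)\ge V-k\,G(t)$ is false for the algorithm you describe, and it is the wrong bookkeeping for any value-sensitive rule.

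At $t=0$, whatever $k$ is, it reads $F(0)\ge V$: every active element arriving at time $0$ must be accepted. Your max-weight-structure test rejects an early element whenever heavier fresh samples span it. Concretely, take $k=1$ and a rank-one matroid on $\{a,b\}$ with $(v_a,x_a)=(1,\tfrac12)$ and $(v_b,x_b)=(2,\tfrac12)$.
\begin{itemize}
\item An active $a$ passes your test with probability exactly $\tfrac12$ at every time.
\item An active $b$ is accepted iff $a$ has not already been accepted.
\item Hence $F(t)=\tfrac54-\tfrac t4$, $G(t)=\tfrac54t-\tfrac{t^2}{8}$ and $V=\tfrac32$.
\item So $F(t)-(V-G(t))=-\tfrac14+t-\tfrac{t^2}{8}<0$ for all $t<4-\sqrt{14}\approx0.26$.
\end{itemize}
The algorithm still earns $\tfrac98\ge(1-\tfrac1e)V$ here; it is the proof that breaks.

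If you remove the filter so that $F(0)=V$, you get greedy, and then blocked value cannot be charged to selected value. Take a rank-one matroid with one heavy element of tiny $x_i$ and one almost worthless element with $x_j\approx1$. Greedy earns only about $V/2$, so the inequality, which for $k=1$ implies $G(1)\ge(1-\tfrac1e)V$, must fail for it. Your plan gives no mechanism reconciling these two requirements.

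The per-matroid ingredients are also not what matroids provide:
\begin{itemize}
\item $\restr{\bx}{N_h}\in P(\scr{M}_h)$ yields only the aggregate count bound $x(\cl(S))\le|S|$ for an independent set $S$ of selected elements. It does not give a per-element bound $\Pr[i\text{ blocked in }h]\le\int_0^t\rho_h$.
\item An arrival-order rule does not ensure that the elements spanning $i$ have value at least $v_i$.
\item Bounding, threshold by threshold, how many heavy elements get selected controls the wrong side of the inequality.
\end{itemize}

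\textbf{What the paper does instead.} It avoids realized blocked-value charging altogether. It prices $i$ at time $t$ by $\gamma(t)\Delta_i(I(t))$, with $\gamma(t)=(1-e^{-k(1-t)})/k$. Here $\Delta_i(A)=R(A)-R(A+i)$ is the drop in the residual LP $R(A)=\max\{\bv\cdot\by:\by\in\scr{P}(A),\ \by\le\bx\}$ over the contracted local matroid polytopes. It then splits $v(I)$ into utility and revenue.
\begin{itemize}
\item Revenue equals $\gamma(0)R(\varnothing)+\int_0^1\gamma'(t)\,r(t)\,dt$ with $r(t)=\E[R(I(t))]$, by integration by parts. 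This is the term that pays for the value the threshold filters out, and your single-potential ODE has no room for it.
\item Utility is at least $\int_0^1(1-k\gamma(t))\,r(t)\,dt$. This uses the Lee--Singla conditioning argument plus one structural lemma: for an optimal residual $\by$, $\sum_{i\notin A,\,A+i\in\scr{F}}y_i\Delta_i(A)\le kR(A)$.
\item That lemma is proved by bijective basis exchange inside each local matroid, which gives fractional exchange vectors $\brho_h(i)$ with $\sum_i y_i\brho_h(i)\le\restr{\by}{N_h}$. Each coordinate then lies in at most $d(i)\le k$ local matroids.
\end{itemize}
The ODE $1-k\gamma+\gamma'=0$ is thus imposed on the discount $\gamma$, not on cumulative value. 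This adaptive-price scheme (Ehsani et al.) is also what Lee and Singla use for $k=1$, rather than a greedy rule against a sampled optimum.
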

Again, this implies an RCRS in the same setting:

\begin{corollary}
\label{cor:rcrs}
For every $k$-matchoid and every
$\bx\in\scr{P}_{\cap}$, there exists a $\left(\frac{1-e^{-k}}{k}\right)$-selectable random-order contention resolution scheme (RCRS) on input $\bx$. Moreover, for each $\varepsilon > 0$, a $\left( \frac{1-e^{-k}}{k}- \varepsilon \right)$-selectable RCRS can be designed efficiently.
\end{corollary}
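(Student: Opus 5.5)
The plan is to derive \Cref{cor:rcrs} from \Cref{thm:ex-ante-secretary} through the LP-duality reduction of Lee and Singla \cite{Lee2018}, following the same route as \Cref{cor:ocrs} but in the random-order setting. The essential input is that the algorithm behind \Cref{thm:ex-ante-secretary} achieves its guarantee against the relaxation benchmark. That is, for every $\bx\in\scr{P}_{\cap}$ and every weight vector $\bv\ge 0$, with $W_i = v_i\cdot\Ber(x_i)$ arriving in uniformly random order, its expected reward is at least $\alpha\sum_i v_i x_i$, where $\alpha=\frac{1-e^{-k}}{k}$. The theorem's proof establishes this stronger statement, working with $\bx\in\scr{P}_{\cap}$ and not just $P(\scr{F})$, and I would invoke that form directly.

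Fix $\bx\in\scr{P}_{\cap}$ and form the following zero-sum game. The scheme designer picks an RCRS, i.e.\ a (randomized) rule that, given the random permutation and the active elements revealed so far, irrevocably selects active elements while staying in $\scr{F}$. The adversary picks $\bv\ge 0$ with $\sum_i v_i x_i = 1$. The payoff is $\sum_i v_i\,\bP[i\text{ selected}]$. Since the random order is part of the scheme's environment and not chosen by the adversary, deterministic RCRSs form a finite set, because $N$ and the number of permutations are finite. Randomized schemes are mixtures of these, so the vectors $(\bP[i\text{ selected}])_i$ range over a polytope $Q$, and the adversary's strategy set is a simplex. By the minimax theorem (equivalently, LP duality), there is $\bq\in Q$ with $\sum_i v_i q_i\ge\alpha$ for all admissible $\bv$ if and only if, for every such $\bv$, some scheme attains value at least $\alpha$. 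The latter holds by \Cref{thm:ex-ante-secretary}: the prophet-secretary algorithm run on $v_i\cdot\Ber(x_i)$ is itself an RCRS, because it only ever accepts elements with $W_i>0$, that is, active ones, and it maintains feasibility. So a $\bq\in Q$ with $\sum_i v_i q_i\ge\alpha$ for all admissible $\bv$ exists. Testing this against the unit vectors $\bv=e_i/x_i$ (for $x_i>0$) gives $q_i\ge\alpha x_i$, which is exactly $\alpha$-selectability. A scheme realizing $\bq$ can then be subsampled so that each element is selected with probability exactly $\alpha x_i$ if needed; this uses downward closedness of $\scr{F}$.

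For the efficient $(\alpha-\varepsilon)$ version, I would not solve the exponential LP. Instead, I would run multiplicative weights (or an ellipsoid/separation argument) over the adversary's weight vectors, normalized by $x_i$ so that only the $|N|$ coordinates $i$ matter. In each round $t$, the current weights $\bv^{(t)}$ define an instance, and the algorithm of \Cref{thm:ex-ante-secretary} serves as the best-response oracle. Round $t$ penalizes coordinates $i$ whose estimated ratio $\bP[i\text{ selected}]/x_i$ exceeds the others. After $T=O(\log|N|/\varepsilon^2)$ rounds, the uniform mixture of the $T$ oracle schemes is $(\alpha-\varepsilon)$-selectable. The selection probabilities needed for the weight updates are estimated by Monte Carlo simulation, which is efficient since each scheme is.

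The main obstacle is in the efficient part, and it is twofold. First, the best-response oracle must have a guarantee of exactly $\alpha\sum v_ix_i$ for arbitrary $\bv$, including elements with $x_i$ tiny; handling those may require truncating elements with $x_i<\varepsilon/|N|$, which costs only $O(\varepsilon)$. Second, the sampling error in estimating selection probabilities must be controlled; Chernoff bounds with polynomially many samples suffice. The existence part is routine given the minimax argument above.
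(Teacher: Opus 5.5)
\paragraph{Existence.} Your existence argument follows the paper's route. The paper obtains \Cref{cor:rcrs} by feeding the weighted guarantee of \Cref{thm:random-prophet} into the pointwise Lee--Singla reduction \Cref{cor:weighted-reduction}; that guarantee is valid for every $\bv\ge 0$ and every $\bx\in\scr{P}_{\cap}$. Your minimax game over the finitely many deterministic random-order policies is a self-contained proof of that reduction. Two small points are needed to make it airtight:
\begin{itemize}
\item Restrict $\bv$ to the support of $\bx$, so the adversary's set is really a simplex. Elements with $x_i=0$ are never active and need nothing.
\item \Cref{alg:prophet} uses continuous clocks. These are internal randomness independent of the permutation, so its selection vector still lies in the convex hull $Q$.
\end{itemize}

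\paragraph{Efficiency.} Here the paper simply cites the ellipsoid construction of Lee and Singla. Your multiplicative-weights scheme, with \Cref{alg:prophet} as best-response oracle, is a different but valid way to solve the same game. Because the gains $\bP[i\in I]/x_i$ lie in $[0,1]$, it has the advantage of giving an explicit uniform mixture of only $O(\log|N|/\varepsilon^2)$ policies.

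\paragraph{The truncation step should be dropped.} First, the ``obstacle'' you name does not exist: \Cref{thm:random-prophet} already holds for arbitrary $\bv$, however small some $x_i$ are. Second, truncating elements with $x_i<\varepsilon/|N|$ would break the result. Selectability is a per-element multiplicative requirement, so a dropped element with $x_i>0$ is selected with probability $0<(\alpha-\varepsilon)x_i$. The $O(\varepsilon)$ cost you have in mind is a loss in value, not in selectability.

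\paragraph{The real small-$x_i$ issue is sampling.} If you estimate $\bP[i\in I]$ to additive accuracy and then divide by $x_i$, you need on the order of $1/(x_i\varepsilon^2)$ samples. This is not polynomial when $x_i$ can be exponentially small, so your claim that ``polynomially many samples suffice'' fails for that estimator. The fix is to estimate the ratio directly. Write $\bP[i\in I]=x_i\,\pi_i$, where $\pi_i$ is the probability that, at time $T_i$, both $I(T_i)+i\in\scr{F}$ and $v_i\ge\gamma(T_i)\Delta_i(I(T_i))$ hold. This event is determined before $i$ reveals $X_i$ and is independent of $X_i$, which is the same fact used in \Cref{lem:utility}. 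Recording its indicator in each simulated run gives an unbiased $[0,1]$-valued estimator of $\bP[i\in I]/x_i$. Hoeffding's inequality plus a union bound over the $|N|$ coordinates and the $T$ rounds then gives the needed accuracy with a number of samples independent of $\bx$.
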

This improves the previously known selectability guarantee of $\frac{1}{k+1}$ due to \cite{adamczyk2018random}. We also recover the known optimal result for $k=1$, and provide improvements
for $k \geq 2$. We note that a $(1-e^{-k})/k$-selectable RCRS was known for the special
case of the intersection of $k$ transversal matroids \cite{Adamczyk2025}, and so our result 
is a generalization of theirs. Finally, we mention that $\left(\frac{1-e^{-k}}{k}\right)$ \textit{is} beatable for $k$-bounded hypergraph matching \cite{MaMZ26Network},
so it remains an interesting open question whether $\left(\frac{1-e^{-k}}{k}\right)$ is tight
for $k$-matchoids or even intersections of $k$ matroids.

These selectability guarantees can also be compared with the known \textit{integrality gaps} of the polytope $\scr{P}_{\cap}$. An OCRS or RCRS for $\scr{P}_{\cap}$ cannot achieve a selectability guarantee better than the integrality gap. When $k =2$ or $k-1$ is a prime power, the known integrality gaps are $\frac{1}{k-1}$ for $k$-partite hypergraph matching, which is a special case of $k$-matroid intersection; and $\frac{1}{k-1+1/k}$ for $k$-bounded hypergraph matching, which is a special case of the $k$-matchoid problem (see \cite{chan2012linear} for these bounds). Therefore, our selectability guarantees are very close to the known integrality gaps.

\subsection{Our Techniques}

Suppose we have a ground set $N$, with weighted Bernoulli random variables,
$(W_i)_{i \in N}$, specified by deterministic weights $(v_i)_{i \in N}$ and 
activation probabilities $\bx=(x_i)_{i \in N}$, which lie in the polytope $\scr{P}_{\cap}$. For the purposes
of the overview, we assume that our constraint system is specified by the intersection
of $k$ matroids on $N$.

The prophet inequality for a single matroid of \cite{kleinberg2012matroid} is based on \textit{adaptive prices}. If $N = \{1, \ldots , |N|\}$, and the elements arrive in chronological order, suppose $I_{j-1}$ denotes the elements accepted right before $j$ is processed. Then, they compute the expected weight of the maximum independent set which is compatible with $I_{j-1}$, whose
value we denote $\tilde{R}(I_{j-1})$. If $I_{j-1} \cup \{j\}$ is independent,
and $W_j \ge \frac{1}{2}( \tilde{R}(I_{j-1}) - \tilde{R}(I_{j-1} \cup \{j\}))$, then they add $j$ to $I_{j-1}$
and proceed to the next element. Their analysis follows a revenue/utility style decomposition, where the crucial inequality they need to lower bound their utility is that if $I$ denotes the final set of accepted elements, and $I \cup B$ is independent for any set $B \subseteq N\setminus I$, then
\begin{equation} \label{eqn:key_inequality}
    \sum_{j \in B} (\tilde{R}(I_{j-1}) - \tilde{R}(I_{j-1}\cup \{j\})) \le \sum_{j \in B} (\tilde{R}(I) - \tilde{R}(I \cup \{j\}))\le \tilde{R}(I).
\end{equation}
While their online algorithm can naturally be generalized to an arbitrary constraint system on $N$, the first inequality of \eqref{eqn:key_inequality} relies on arguing that a certain function is submodular, which does not hold for intersections of $k$ matroids. For this generalized setting, they develop a $\frac{1}{4k-2}$-competitive prophet inequality using a modified pricing approach which cannot attain a guarantee of $\frac{1}{k+1}$. 

\paragraph{Our approach for prophet inequality.}
An inspiration for our prophet inequality is the recent work of \cite{KP2026},
which proposes a new approach to derive a $1/2$-competitive prophet inequality for a matroid constraint. (In addition, this prophet inequality is ``almost non-adaptive''; in some sense
fixing all its thresholds in advance.) The approach of \cite{KP2026} is based
on the notion of a {\em principal partition} of a matroid, which divides the ground
set into disjoint sets of decreasing ``density''. A certain weighted notion of 
a principal partition is used in \cite{KP2026} to set the value threshold for each
element in order to be selected by the algorithm.

We extend this idea to the setting of $k$ matroids, or more generally a $k$-matchoid,
which presents several difficulties. If we followed the analysis of \cite{KP2026}, each matroid would produce an unrelated principal partition of the ground set, so it is not clear how to coordinate these partitions into a unified analysis. The solution is a collection of weighted principal partitions
guided by a shared {\em surplus vector} $\bp$. The densities in each partition determine
a matroid-specific price vector $\bt_a$, and the thresholds are determined in turn
by the summation $\btau = \sum_a \bt_a$. However, this leads to a circular argument since the surplus vector $\bp$ is actually determined by the threshold vector $\btau$.

Resolving this circularity to find the surplus vector requires a fixed point which can be found by minimizing a certain convex potential function (and hence, it can even be found algorithmically in polynomial time). Identifying this fixed point, its important properties, and the appropriate potential function is the main technical contribution of this part of the paper. We remark that even in the case of $k=1$, our scheme is different from that of \cite{KP2026}.

We provide a detailed overview of our algorithm in \Cref{sec:detailed_overview_adversarial} prior to presenting the formal analysis in \Cref{sec:adversarial_detailed}.

\paragraph{Our approach for prophet secretary.}
Our starting point is the natural extension of the adaptive pricing algorithm of \cite{kleinberg2012matroid} to random-order arrivals, where we recall that $\tilde{R}(I)$ denotes
the expected weight of a feasible subset compatible with $I$. 

First, suppose that each $i \in N$ draws an arrival time $T_i \in [0,1]$ uniformly, and the elements are presented in increasing order
of arrival times. Then, for an appropriately chosen discount function $\gamma(t)$, if $I(t)$ denotes the elements accepted strictly before time $t$, element $i \in N$ with $T_i = t$ is accepted if possible, and 
\begin{equation} \label{eqn:continuous_acceptance}
W_i \ge \gamma(t)(\tilde{R}(I(t)) - \tilde{R}(I(t) \cup \{i\})).
\end{equation}
For a single matroid constraint, \cite{Ehsani2017} showed that for an appropriate choice of $\gamma$, this online algorithm is $(1-1/e)$-competitive. Afterwards, \cite{Lee2018} extended it to hold against the ex-ante benchmark.

When working with the generalization of this algorithm to the intersection of $k$ matroids,
in order for our algorithm to be efficient, first we must set the prices slightly differently.
Given a feasible subset $A \subseteq N$ and the linear objective $(v_j)_{j \in N}$, we compute an optimal residual vector $\by\le\bx$ to the restriction of $\scr{P}_{\cap}$, where $A$ is contracted. We denote $R(A)$ to be the value of this restricted LP.
Then, our algorithm is designed via adaptive prices as in \eqref{eqn:continuous_acceptance}, however we modify \eqref{eqn:continuous_acceptance} to be $W_i \ge \gamma_k(t)(R(I(t)) - R(I(t) \cup \{i\}))$, where $\gamma_k(t)$ is now a discount function dependent on $k$. 

In terms of the analysis, the main change is in lower bounding the online algorithm's utility. If
$\by =(y_i)_{i \in N}$ denotes the optimal residual vector to the restriction of $\scr{P}_{\cap}$ when 
$I(t)$ is contracted, 
then we argue that
\begin{equation} \label{eqn:local_exchange_fractional}
    \sum_{\text{$i\notin I(t): I(t) \cup \{i\}$ feasible}}  y_i\bigl(R(I(t))-R(I(t)\cup \{i\})\bigr)\le kR(I(t)).
\end{equation}
Roughly speaking, \eqref{eqn:local_exchange_fractional} can be seen as a fractional version of the second inequality in \eqref{eqn:key_inequality}, generalized to the intersection of $k$ matroids. Interestingly, for random-order arrivals, we are able to side-step the aforementioned submodularity step needed in the first inequality of \eqref{eqn:key_inequality}, which was the reason we abandoned adaptive prices for adversarial arrivals. The rest of the proof follows the standard utility--revenue calculation, except we are left with the differential equation
$1-k\gamma_k(t)+\gamma_k'(t)=0$, yielding $\gamma_k(0)=(1-e^{-k})/k$.

\section{Preliminaries}
\label{sec:preliminaries}

\subsection{Basic Notation}
\label{sec:notation}

We use $N$ for a ground set of elements.  For
$A\subseteq N$ and element $i \in N$, write $A+i:=A\cup\{i\}$ and $A-i:=A\setminus\{i\}$.
We bold all of our vectors $\bz$, and denote the $i$-th coordinate of $\bz$ by $z_i$.  Vector
inequalities are coordinatewise.  For $S\subseteq N$ and a vector $\bz$, let
\[
    z(S):=\sum_{i\in S}z_i,
    \quad
    \restr{\bz}{S}:=(z_i:i\in S).
\]
Let $\one_S$ be the incidence vector of $S$.  We write $\bu\cdot\bz$ for
the inner product, $\bu\odot\bz$ for coordinatewise multiplication, and
$\pos{z}:=\max\{z,0\}$.  Matroid labels precede coordinate labels in subscripts, such as in $t_{a,i}$.

\subsection{Matroids and Matchoids}
\label{sec:matroids-matchoids}

A matroid $\cM=(N,\I)$ consists of a nonempty family $\I\subseteq 2^N$ of
independent sets satisfying two axioms: every subset of an independent set is
independent, and whenever $I,J\in\I$ with $|I|<|J|$, there exists $j\in J\setminus I$
such that $I+j\in\I$.  A basis is an inclusion-wise maximal independent set.  We
write $\I(\cM)$ and $\B(\cM)$ for the independent sets and bases of $\cM$.
The rank of $S\subseteq N$ is
\[
    r_\cM(S):=\max\{|I|:I\subseteq S,\ I\in\I(\cM)\}.
\]
The independent-set and base polytopes are
\begin{align}
P(\cM)&:=\conv\{\one_I:I\in\I(\cM)\}=\{\bz\in\R^N_{\ge0}:z(S)\le r_\cM(S)
       \text{ for every }S\subseteq N\},\label{eq:ind-polytope}\\
B(\cM)&:=\conv\{\one_B:B\in\B(\cM)\}
=\{\bz\in P(\cM):z(N)=r_\cM(N)\}.\label{eq:base-polytope}
\end{align}
For $A\subseteq N$, the restriction $\cM|_A$ has ground set $A$ and
independent sets $\{I\in\I(\cM):I\subseteq A\}$.  The contraction $\cM/A$
has ground set $N\setminus A$ and rank function
\[
    r_{\cM/A}(S)=r_\cM(A\cup S)-r_\cM(A)
    \qquad(S\subseteq N\setminus A).
\]
The closure of $S$ is
\[
    \cl_\cM(S):=\{i\in N:r_\cM(S+i)=r_\cM(S)\}.
\]
If $\cN_1,\ldots,\cN_m$ are matroids on disjoint ground sets, their direct sum
$\cN_1\oplus\cdots\oplus\cN_m$ has independent sets
$I_1\cup\cdots\cup I_m$ with $I_j\in\I(\cN_j)$.  We write
$\cN\preceq\cM$ and call $\cN$ a \emph{strengthening} of $\cM$ when every
independent set of $\cN$ is independent in $\cM$.

A $k$-matchoid on a set of elements $N$ consists of matroids $\scr{M}_h=(N_h,\scr{I}_h)$ indexed by $h\in H$, where
$\bigcup_hN_h=N$ and
\[
    H(i):=\{h:i\in N_h\},
    \quad
    d(i):=|H(i)|\le k.
\]
We call these matroids the \emph{local matroids} of the matchoid, and the sets $N_h$ their
local ground sets. The family of feasible sets is
\[
    \scr F:=\{A\subseteq N:A\cap N_h\in\scr I_h
                  \text{ for every }h\in H\}.
\]
We call $A_h:=A\cap N_h$ the local restriction of $A$. Let
\[
    P(\scr{F}):=\conv\{\one_A:A\in\scr F\}.
\]
We work with the relaxation
\begin{equation}
    \scr P_\cap:=\bigl\{\bz\in[0,1]^N:
       \restr{\bz}{N_h}\in P(\scr M_h)\text{ for every }h\in H\bigr\}.
    \label{eq:local-relaxation}
\end{equation}
Every feasible set is locally independent, so $P(\scr{F})\subseteq\scr
P_\cap$.  The local relaxation has a polynomial-time separation oracle in
the standard independence-oracle model \cite{Edmonds1970,GLS1981}.

\subsection{The Ex-Ante Benchmark}
\label{sec:ex-ante-benchmark}

Let the values $(W_i)_{i\in N}$ of the prophet input be independent, nonnegative random variables
of finite expectation, with distributions $D_i$ and distribution functions
$F_i$.  Denote $F_i^{-1}$ as the generalized quantile function of $F_i$.  The
upper-tail reward curve of element $i$ is then
\[
    R_i(q):=\int_{1-q}^1 F_i^{-1}(u)\,du
    \qquad(q\in[0,1]).
\]
Here $R_i(q)$ is the expected contribution from retaining $W_i$
when it lies in its top $q$ quantile, with randomized tie-breaking at atoms.
For a downward-closed relaxation $\scr P\subseteq[0,1]^N$, define
\begin{equation}
    \EA_{\scr P}(\mathcal D):=\max_{\bx\in\scr P}\sum_{i\in N}R_i(x_i),
    \qquad \mathcal D:=(D_i)_{i\in N}.
    \label{eq:ex-ante-benchmark}
\end{equation}
The exact \textit{ex-ante benchmark} is defined with $\scr P=P(\scr{F})$. By interpreting $x_i$
as the probability the prophet selects element $i$, it is shown in \cite{feldman2015online} 
that the ex-ante benchmark upper bounds the prophet benchmark.

We in fact prove \Cref{thm:ex-ante-adversarial} and \Cref{thm:ex-ante-secretary} against
\eqref{eq:ex-ante-benchmark} with $\scr P=\scr P_\cap$.
Since $P(\scr{F}) \subseteq \scr P_\cap$, this only leads to a stronger result.
In what follows, we drop the dependence on $\scr{P}$ in $\EA_{\scr P}$,
as we always work with $\scr P=\scr P_\cap$.

For a weighted Bernoulli instance, i.e., $W_i=v_iX_i$, and $X_i\sim\Ber(x_i)$
for each $i \in N$, we say that $i$ is \textit{active} or that $i$ \textit{activates} if $X_i =1$.
Note that for such an input, the reward curve is $R_i(q)=v_i\min\{q,x_i\}$. If $\bx = (x_i)_{i \in N} \in \scr P_\cap$, then the
maximum in \eqref{eq:ex-ante-benchmark} is attained at $\bx$ and equals
\begin{equation}
    \EA(\bv,\bx):=\sum_{i\in N}v_ix_i.
    \label{eq:EA}
\end{equation}
For an arbitrary choice of $\bx\in\scr P_\cap$ and $\bv\geq0$,
we design online algorithms with expected reward at least $\alpha \cdot \EA(\bv,\bx)$, for the claimed choices of $\alpha$ in Theorems \ref{thm:ex-ante-adversarial} and \ref{thm:ex-ante-secretary}, respectively.
The general ex-ante prophet inequalities follow from
\Cref{lem:bernoulli-reduction} of \Cref{sec:lee-singla-reductions}, and the OCRS and RCRS statements follow from
\Cref{cor:weighted-reduction} of \Cref{sec:lee-singla-reductions}.

\section{Prophet Inequality: Outlining Our Approach} \label{sec:detailed_overview_adversarial}

In this section we focus on the intersection of $k$ matroids $\cM_1,\ldots,\cM_k$ on the common ground set $N$. This captures most of the technical difficulties; the full $k$-matchoid
argument is proved in \Cref{sec:random-element-ocrs}.

Fix a weighted Bernoulli instance $(\bv,\bx)$ as in
\Cref{sec:ex-ante-benchmark}.  We assume
\begin{equation}
\bx\in\bigcap_{a=1}^k P(\cM_a).\label{eq:bernoulli-feasible}
\end{equation}
If $x_i=0$, then $i$ is never active. We may delete such elements. After doing so, every remaining element is a nonloop in every $\cM_a$, since $0<x_i\le r_{\cM_a}(\{i\})$.

\paragraph{A high-level view.}

We seek a price vector $\bt_a = (t_{a,i} :i \in N)$ for each matroid, such that the thresholds $\tau_i$ to be used by the algorithm are defined as $\tau_i = \sum_{a=1}^{k} t_{a,i}$. We also define $p_i = (v_i - \tau_i) x_i$, the expected surplus of element $i$ relative to the threshold $\tau_i$. In vector form,
\[
\btau:=\sum_{a=1}^k \bt_a,
\qquad
\bp:=(\bv-\btau)\odot \bx.
\]
Let $I$ denote the output set. The intended meaning of the matroid-specific
prices $t_{a,i}$ is that they capture the lost profit of elements blocked due to
matroid $\cM_a$: More precisely, let $A$ denote the items of positive surplus
$p_i>0$ which are available for inclusion at the time they arrive.
The property we want is that $\tau(I) \geq p(N \setminus A)$.
If that's the case, then the expected value of the set $I$ found by the algorithm can be lower-bounded as 
$$ \E[v(I)] = \E[(\bv \odot \bx) \cdot \one_A] = \E[\bp \cdot \one_A + \btau \cdot \one_I]
\geq \E[\bp \cdot \one_A + \bp \cdot \one_{N \setminus A}] = \bp \cdot \one_N.$$
Let's denote $P = p(N) = \bp \cdot \one_N$; this is our lower bound on the performance of the algorithm.

On the other hand, we would like to claim that the ex-ante optimum is at most $(k+1) \cdot P$. This will be true if there is a relationship between the price vectors $\bt_a$ and the surplus vector $\bp$ such that 
the $\bt_a$-price of the ex-ante optimum is at most the global surplus: 
$\bt_a \cdot \bx \leq \bp \cdot \one_N = P$. Then, the ex-ante optimum is upper-bounded by
$$\EA(\bv,\bx) = \bv \cdot \bx = \btau \cdot \bx + \bp \cdot \one_N = \sum_{a=1}^{k} \bt_a \cdot \bx + \bp \cdot \one_N \leq (k+1) P.$$
The key to satisfying these bounds is finding the appropriate
vectors $\bt_a$ and $\bp$, which is the main technical hurdle in the proof.

\paragraph{Weighted principal partitions of a matroid.}
An important idea due to \cite{KP2026} is that the notion of a {\em principal partition} of a matroid can be useful for defining prices for a prophet inequality algorithm. An (unweighted) principal partition is obtained by identifying a subset of maximum ``density'' $|S|  / r_\cM(S)$, contracting this subset and recursing on the remaining matroid. The rough idea is that in a block of high density, we can only pick a small fraction of elements, and hence we should set a price proportional to density in order to select the most valuable elements. In order to prove a prophet inequality, we also have to take into account the elements' values $v_i$, and hence a certain variant of a {\em weighted principal partition} is used in \cite{KP2026}.

In our context, we consider a variation of this idea: a weighted principal partition determined by the surplus vector $\bp$ for each matroid $\cM_a$ as follows: Given $\bp$, we define a subset $U_{a,1} \in \argmax_{S \neq \emptyset} \frac{p(S)}{r_{\cM_a}(S)}$, contract it and recurse on the remaining matroid $\cM_a / U_{a,1}$, until no elements remain. This defines a partition $N = U_{a,1} \sqcup U_{a,2} \sqcup \ldots \sqcup U_{a,\ell}$; we also define $A_{a,0}=\varnothing$ and $A_{a,j} = U_{a,1} \cup \ldots \cup U_{a,j}$. (Keep in mind that we have a different partition for each matroid $\cM_a$.) 
Let's denote by $\cN_{a,j}$ the matroid $(\cM_a/A_{a,j-1})|_{U_{a,j}}$,
and define the matroid $\widehat{\cM}_a$ as the direct sum of the matroids $\cN_{a,j}$ for $j=1,2,\ldots,\ell$. In other words, for each $j$ we are allowed to take a subset of $U_{a,j}$ which does not violate independence in $\cN_{a,j}$. This is a more stringent requirement than being independent in $\cM_a$, and hence any selected set will be feasible as desired.

Each block of the partition is endowed with a per-element price which is exactly the respective density $\alpha_{a,j} = \frac{p(U_{a,j})}{r_{\cM_a / A_{a,j-1}}(U_{a,j})}$. It follows from the construction that these prices are decreasing: $\alpha_{a,1} > \alpha_{a,2} > \ldots > \alpha_{a,\ell}$. 
We set $t_{a,i} = \alpha_{a,j}$ for each element $i \in U_{a,j}$.
Adding up over the $k$ matroids, this defines a total price $\tau_i = \sum_{a=1}^{k} t_{a,i}$,
which is the threshold the algorithm uses for selection of $i$.

However, observe that this construction has a circular structure: 
The surplus vector $\bp$ determines the principal partition for each matroid, which determines the price vectors $\bt_a$, and $\btau = \sum_{a=1}^{k} \bt_a$ in turn determines the surplus vector $\bp$. Resolving this circular argument requires finding a fixed point satisfying some desired properties. We formalize these properties as follows:
\begin{enumerate}
\item [(1)] $\bp  = (\bv - \sum_{a=1}^{k} \bt_a) \odot \bx \geq 0$.
\item [(2)] $\bp = \bt_a \odot \by_a$ for each $a=1,\ldots,k$.
\item [(3)] $\by_a \in \argmax_{\by \in B(\cM_a)} \bt_a \cdot \by$ for each $a=1,\ldots,k$.
\end{enumerate}
It can be verified that these conditions imply that the partition defined by the level sets
of $\bt_a$ for each matroid $\cM_a$ is a $\bp$-weighted principal partition. It remains to find
the desired vectors satisfying (1), (2), (3).

\paragraph{Finding the price and surplus vectors.}
As suggested previously, the desired vectors can be written as a fixed point of a certain map,
whose existence follows from Kakutani’s fixed-point theorem. However, since we also aim to construct these
vectors efficiently, we instead introduce a convex potential function whose minimizer yields the
required vectors.

For each matroid $\cM_a$, define
\[
\Gamma_{\cM_a}(\bt_a):= \frac12 \max_{B\in\B(\cM_a)} \sum_{i\in B} (t_{a,i})^2.
\]
We note that the subgradients of $\Gamma_{\cM_a}$ are vectors of the form $\bt_a \odot \by_a$ such that $\by_a$ is $\bt_a$-maximizing among fractional bases in $\cM_a$; i.e. the candidate vectors for condition (2).
We set $\bt_1,\ldots,\bt_k$ to be vectors minimizing the following convex potential function:
\begin{eqnarray}
\Phi(\bt_1,\ldots,\bt_k) & = & \sum_{a=1}^k\Gamma_{\cM_a}(\bt_a) + \frac12\sum_{i\in N}x_i(v_i-\tau_i)^2 \\
& = & \frac12 \sum_{a=1}^{k} \max_{B\in\B(\cM_a)} \sum_{i\in B} (t_{a,i})^2  +\frac12\sum_{i\in N} \left(v_i-\sum_{a=1}^k t_{a,i}\right)^2 x_i
\, \, \text{, over }\bt_1,\ldots,\bt_k\ge 0.\label{eq:program-idea}
\end{eqnarray}
The optimality conditions imply condition (1), the surplus vector $\bp = (\bv - \btau) \odot \bx$ is nonnegative, and for every matroid $\cM_a$, $\bp$ is a subgradient of $\Gamma_{\cM_a}$ at $\bt_a$. This is exactly condition (2), $\bp = \bt_a \odot \by_a$, simultaneously for each matroid $\cM_a$. If we define $U_{a,j}$ as the sets of elements attaining distinct values of $t_{a,i}$, it can be shown that this is exactly the $\bp$-weighted principal partition for each matroid $\cM_a$. 
We obtain all the necessary properties for the analysis from here.

\paragraph{Lower-bounding the algorithm.}
Consider now a particular block $U_{a,j}$ and the associated matroid $\cN_{a,j}$. 
From the properties of a weighted principal partition, we obtain the following for any $S \subseteq U_{a,j}$:
\begin{equation}
p(S) \le \alpha_{a,j} r_{\cN_{a,j}}(S),
\qquad
p(U_{a,j}) = \alpha_{a,j} r_{\cN_{a,j}}(U_{a,j}).\label{eq:density-idea}
\end{equation}
In other words, $U_{a,j}$ has the maximum density among all subsets of $U_{a,j}$, and the maximizing density is exactly $\alpha_{a,j}$. 

Suppose that our algorithm selects a set $I$, and $C_{a,j}$ is the subset of the block $U_{a,j}$ blocked by $I \cap U_{a,j}$ in the matroid $\cN_{a,j}$, i.e. $C_{a,j} = \cl_{\cN_{a,j}}(I \cap U_{a,j})$. By the definition of the closure, $r_{\cN_{a,j}}(C_{a,j}) = r_{\cN_{a,j}}(I \cap U_{a,j}) = |I \cap U_{a,j}|$. Hence we have
\[
p(C_{a,j}) \le \alpha_{a,j} r_{\cN_{a,j}}(C_{a,j}) = \alpha_{a,j} |I\cap U_{a,j}| = t_a(I \cap U_{a,j}).
\]
The last expression is exactly the $\bt_a$-price of the accepted elements in this block. Each blocked
element in $N$ can be charged to some matroid in which it is blocked, and so the surplus of all the blocked elements $C$ can be bounded as
\begin{equation}
p(C)\le \sum_{a=1}^k \sum_j t_a(I \cap U_{a,j}) = \sum_{a=1}^{k} t_a(I) = \tau(I).\label{eq:blocking-idea}
\end{equation}
Thus the surplus lost through blocking is paid for by the price of the selected set.

Now we can lower-bound $\E[v(I)]$: Let $A$ denote the set of elements of positive surplus $p_i>0$, available for inclusion when they arrive. If $i \in A$ is also active, it will be selected. Hence
$$ \E[v(I)] = \E[\sum_{i \in A} v_i x_i] = \E[\sum_{i \in A} (p_i + \tau_i x_i)]
\geq \E[\sum_{i \in N \setminus C} p_i + \sum_{i \in I} \tau_i]
= \E[p(N) - p(C) + \tau(I)] \geq p(N) = P.$$

\paragraph{Upper-bounding the ex-ante optimum.}
Finally, consider the ex-ante optimum $\bx$. Since $\bx \in P(\cM_a)$ for each $\cM_a$, we have
$\bt_a \cdot \bx = \sum_{j=1}^{\ell} \alpha_{a,j} x(U_{a,j})$ and we have $\alpha_{a,1} > \alpha_{a,2} > \ldots$. Hence, this expression is maximized over $\by \in P(\cM_a)$ by the greedy algorithm, assigning as much mass as possible successively to $y(U_{a,1})$, $y(U_{a,2})$, etc. The optimal choice is $y(U_{a,j}) = r_{\cM_a}(A_{a,j}) - r_{\cM_a}(A_{a,j-1}) = r_{\cN_{a,j}}(U_{a,j})$, which gives 
$$ \bt_a \cdot \bx \leq \bt_a \cdot \by = \sum_{j=1}^{\ell} \alpha_{a,j} r_{\cN_{a,j}}(U_{a,j}) = \sum_{j=1}^{\ell} p(U_{a,j}) = P.$$
We conclude that
$$ \EA(\bv,\bx) = \bv \cdot \bx = \bp \cdot \one_N + \sum_{a=1}^{k} \bt_a \cdot \bx \leq (k+1) P, $$
which gives the desired ex-ante competitive ratio.

In the following, we give a formal proof based on this outline.

\section{Prophet Inequality: Detailed Proofs} \label{sec:adversarial_detailed}

For a matroid $\cM$ on $N$, define
\begin{equation}
\Gamma_\cM(\bt):=\frac12\max_{B\in\B(\cM)}\sum_{i\in B} t_i^2
=\frac12\max_{\by\in B(\cM)} (\bt\odot\bt) \cdot \by.\label{eq:Gamma}
\end{equation}
The next lemma contains all the matroid facts used in the proof.
\begin{lemma}[Quadratic prices and principal blocks]\label{lem:principal-blocks}
For every $\bt\in\R^N$,
\begin{equation}
\partial\Gamma_\cM(\bt)
=\left\{\bt\odot \by:\by\in B(\cM),\ \frac12 (\bt\odot\bt) \cdot \by = \Gamma_\cM(\bt)\right\}.\label{eq:subdiff}
\end{equation}
Suppose now that $\bt\ge0$. Let $\alpha_1>\cdots>\alpha_\ell\ge0$ be the distinct coordinates of $\bt$,
and put
\begin{equation}
A_0:=\varnothing,
\qquad
A_j:=\{i:t_i\ge\alpha_j\},
\qquad
U_j:=A_j\setminus A_{j-1}.\label{eq:levels}
\end{equation}
Define
\begin{equation}
\cN_j:=(\cM/A_{j-1})|_{U_j},
\qquad
\widehat \cM(\bt):=\cN_1\oplus\cdots\oplus\cN_\ell.\label{eq:direct-sum}
\end{equation}
Then
\begin{equation}
B(\widehat \cM(\bt))
=\argmax_{\by\in B(\cM)} \bt \cdot \by
=\argmax_{\by\in B(\cM)}(\bt\odot\bt) \cdot \by,\label{eq:face}
\end{equation}
and $\widehat \cM(\bt)\preceq \cM$.

If $\bp\in\partial\Gamma_\cM(\bt)$, then for every $j$ and every $S\subseteq N\setminus A_{j-1}$,
\begin{equation}
p(S)\le\alpha_j r_{\cM/A_{j-1}}(S),
\qquad
p(U_j)=\alpha_j r_{\cN_j}(U_j).\label{eq:residual-density}
\end{equation}
In particular,
\begin{equation}
\restr{\bp}{U_j}\in\alpha_j B(\cN_j),
\qquad
p(S)\le\alpha_j r_{\cN_j}(S)\quad(S\subseteq U_j).\label{eq:block-density}
\end{equation}
Finally, for every $\bx \in P(\cM)$,
\begin{equation}
\bt \cdot \bx \le p(N).\label{eq:price-budget}
\end{equation}
\end{lemma}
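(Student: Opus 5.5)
The plan has four steps: the subdifferential formula, the face identity, the density bounds, and the price budget.

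\emph{Subdifferential.} Write $\Gamma_\cM(\bt)=\max_{\by\in B(\cM)} g_\by(\bt)$ with $g_\by(\bt)=\frac12(\bt\odot\bt)\cdot\by$. Each $g_\by$ is a smooth convex quadratic (since $\by\ge0$) with gradient $\bt\odot\by$. By Danskin's theorem, for the pointwise maximum over the compact polytope $B(\cM)$,
\[
\partial\Gamma_\cM(\bt)=\conv\{\bt\odot\by : \by \text{ a maximizer}\}.
\]
The set of maximizers is a face of $B(\cM)$, hence convex. The map $\by\mapsto\bt\odot\by$ is linear, so this convex hull is already the set in \eqref{eq:subdiff}.

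\emph{The face identity \eqref{eq:face}.} Take $\bt\ge0$. Then $\bt$ and $\bt\odot\bt$ induce the same weak ordering of coordinates, with the same level sets $U_j$. The greedy/Edmonds characterization says $\by\in B(\cM)$ maximizes a linear objective whose level sets are the $U_j$ exactly when $y(A_j)=r_\cM(A_j)$ for every $j$. The reason: the objective equals $\sum_j(\alpha_j-\alpha_{j+1})\,y(A_j)$ with nonnegative coefficients, and each term is bounded by $r_\cM(A_j)$, and the greedy basis attains all bounds simultaneously. Coefficients equal to zero, as with $\alpha_\ell=0$, cause no problem: $y(A_\ell)=y(N)=r(N)$ holds on the whole base polytope anyway.

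It then remains to identify $\{\by\in B(\cM):y(A_j)=r(A_j)\ \forall j\}$ with $B(\widehat\cM(\bt))$. This is the standard fact that the face of a matroid base polytope cut out by a chain of tight sets is the base polytope of the direct sum of the minors $(\cM/A_{j-1})|_{U_j}$. To verify it, show that the constraints $y(S)\le r_{\cM/A_{j-1}}(S)$ for $S\subseteq U_j$ follow from submodularity together with tightness of $A_{j-1}$:
\[
y(S\cup A_{j-1})\le r(S\cup A_{j-1}).
\]
Conversely, the direct-sum constraints imply all the original rank constraints, using submodularity across the chain. The strengthening $\widehat\cM(\bt)\preceq\cM$ is the standard fact that a union of independent sets of the successive contractions is independent in $\cM$; it follows by induction on $j$.

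\emph{Density bounds.} Let $\bp=\bt\odot\by$ with $\by$ in the face, so $y(A_j)=r(A_j)$. For $S\subseteq N\setminus A_{j-1}$, every $i\in S$ has $t_i\le\alpha_j$, hence
\[
p(S)\le\alpha_j\,y(S)=\alpha_j\bigl(y(S\cup A_{j-1})-y(A_{j-1})\bigr)\le\alpha_j\bigl(r(S\cup A_{j-1})-r(A_{j-1})\bigr)=\alpha_j\,r_{\cM/A_{j-1}}(S).
\]
On $U_j$ we have $t_i=\alpha_j$ exactly, so $p(U_j)=\alpha_j\,y(U_j)=\alpha_j\,r_{\cN_j}(U_j)$. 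The statement $\restr{\bp}{U_j}\in\alpha_jB(\cN_j)$ follows because $\restr{\by}{U_j}\in B(\cN_j)$ by the face identity. Note that $r_{\cN_j}(S)=r_{\cM/A_{j-1}}(S)$ for $S\subseteq U_j$.

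\emph{Price budget \eqref{eq:price-budget}.} For $\bx\in P(\cM)$ and $\bt\ge0$, greedy optimality over the independence polytope gives $\bt\cdot\bx\le\max_{\by'\in B(\cM)}\bt\cdot\by'$, and this maximum is attained at our $\by$ by the face identity. Hence
\[
\bt\cdot\bx\le\bt\cdot\by=\sum_j\alpha_j\,y(U_j)=\sum_j p(U_j)=p(N).
\]

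The main obstacle is the face identification in the face-identity step, in particular the converse inclusion (direct-sum bases are bases of $\cM$ and tight on the chain). For that direction, use the rank identity $r_{\widehat\cM}(N)=\sum_j\bigl(r(A_j)-r(A_{j-1})\bigr)=r(N)$ together with the strengthening property.
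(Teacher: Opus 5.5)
Your proposal is correct and follows essentially the same route as the paper. Both arguments get the subdifferential as the convex hull of gradients of the maximizing quadratics, characterize the optimal face by tightness $y(A_j)=r_\cM(A_j)$ on the level-set chain, and use the same estimate $p(S)\le\alpha_j y(S)\le\alpha_j r_{\cM/A_{j-1}}(S)$ and the same price budget $\bt\cdot\bx\le\max_{\bz\in B(\cM)}\bt\cdot\bz=p(N)$.

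The differences are minor. You identify the face with $B(\widehat\cM(\bt))$ at the polytope level and prove $\widehat\cM(\bt)\preceq\cM$ directly by induction and submodularity. The paper instead argues with integral bases and deduces the strengthening from the face identity. Your version makes explicit a fact the paper's converse direction uses tacitly: bases of the direct sum are independent in $\cM$.
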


\begin{proof}
For a base $B$, let
\[
\gamma_B(\bt):=\frac12\sum_{i\in B}t_i^2.
\]
Then $\Gamma_\cM=\max_{B\in\B(\cM)}\gamma_B$ and
$\nabla\gamma_B(\bt)=\bt\odot\one_B$. The subdifferential of a finite maximum of functions is the convex hull of the gradients of the maximizing functions. This gives \eqref{eq:subdiff}.

Suppose $\bt\ge0$ and consider $A_1, A_2,\ldots$ defined as above. By the greedy algorithm for matroids, a base $B$ has maximum $\bt$-weight $t(B) = \bt \cdot \one_B$ if and only if
\begin{equation}
|B\cap A_j|=r_\cM(A_j)\qquad(j=1,\ldots,\ell).\label{eq:tight-levels}
\end{equation}
The same bases maximize $\bt\odot\bt$, since $\bt$ and $\bt\odot\bt$ induce the same ordering of the elements. The
equalities in \eqref{eq:tight-levels} hold if and only if $B\cap U_j$ is a base of
$(\cM/A_{j-1})|_{U_j}$ for every $j$. Indeed, if \eqref{eq:tight-levels} holds, then
$B\cap A_{j-1}$ is a base of $\cM|_{A_{j-1}}$ and $B\cap A_j$ is independent. Thus $B\cap U_j$ is
independent in $\cM/A_{j-1}$, and its cardinality is
$r_\cM(A_j)-r_\cM(A_{j-1})=r_{\cN_j}(U_j)$, so it is a base of $\cN_j$. Conversely, if $B\cap U_j$ is a
base of $\cN_j$ for every $j$, then summing the block cardinalities gives
$|B\cap A_j|=r_\cM(A_j)$ for every $j$. Taking convex hulls gives \eqref{eq:face}.

Every base of $\widehat \cM(\bt)$ is therefore a base of $\cM$. Since every independent set of
$\widehat \cM(\bt)$ extends to one of its bases, it is also independent in $\cM$. Thus
$\widehat \cM(\bt)\preceq \cM$.

By \eqref{eq:subdiff} and \eqref{eq:face}, we may write $\bp=\bt\odot \by$ for some
$\by\in B(\widehat \cM(\bt))$. In particular, $y(A_j)=r_\cM(A_j)$ for every $j$. Fix $j$ and
$S\subseteq N\setminus A_{j-1}$. Since $t_i\le\alpha_j$ outside $A_{j-1}$,
\[
p(S)\le\alpha_j y(S)
\le\alpha_j\bigl(r_\cM(A_{j-1}\cup S)-r_\cM(A_{j-1})\bigr).
\]
Taking $S=U_j$, both inequalities are equalities. This proves \eqref{eq:residual-density}. Since
$\restr{\by}{U_j}\in B(\cN_j)$ and
$\restr{\bp}{U_j}=\alpha_j\restr{\by}{U_j}$, it also proves
\eqref{eq:block-density}. Finally, since $\bt\ge0$, a maximum-$\bt$ independent set may be extended to
a base without losing weight. Therefore, for $\bx\in P(\cM)$,
\[
\bx \cdot \bt \le \max_{\bz \in B(\cM)} \bt \cdot \bz = \bt \cdot \by = p(N).
\]
\end{proof}

Equation \eqref{eq:residual-density} is the density description of a weighted principal partition of
$(\cM,\bp)$. After $A_{j-1}$ is contracted, the block $U_j$ has maximum density $\alpha_j$. We do not
impose a maximality convention when several densest sets tie; rank-zero blocks may be discarded.
With this convention, the level sets of $\bt$ recover a $\bp$-weighted principal partition for $\cM$.

We now choose the prices for all $k$ matroids at once.

\begin{proposition}[Common-surplus equilibrium]\label{prop:equilibrium}
Suppose $x_i>0$ for every $i$ and \eqref{eq:bernoulli-feasible} holds. There are vectors
$\bt_1,\ldots,\bt_k\in\R^N_{\ge0}$ and $\bp\in\R^N_{\ge0}$ such that, with
$\btau:=\sum_{a=1}^k \bt_a$,
\begin{equation}
\bp=\bx\odot(\bv-\btau)\label{eq:surplus}
\end{equation}
and
\begin{equation}
\bp\in\partial\Gamma_{\cM_a}(\bt_a)\qquad(a=1,\ldots,k).\label{eq:common-subgradient}
\end{equation}
In particular, $\btau\le \bv$ and
\begin{equation}
x_iv_i=p_i+x_i\tau_i\qquad(i\in N).\label{eq:exact-decomposition}
\end{equation}
\end{proposition}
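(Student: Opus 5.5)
We obtain the vectors as a minimizer of the convex potential $\Phi$ from \eqref{eq:program-idea}, taken over the closed convex cone $(\R^N_{\ge0})^k$. Each $\Gamma_{\cM_a}$ is convex, being a maximum of convex quadratics $\gamma_B$. The second term $\frac12\sum_i x_i(v_i-\tau_i)^2$ is a convex quadratic in $(\bt_1,\ldots,\bt_k)$. Hence $\Phi$ is convex and continuous.

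\textbf{Existence of a minimizer.} We show coercivity on the cone. Since every element is a nonloop in $\cM_a$, each $i$ lies in some base $B$ of $\cM_a$. Therefore
\[
\Gamma_{\cM_a}(\bt_a)\ \ge\ \frac12 t_{a,i}^2\ \ge\ \frac{1}{2|N|}\sum_i t_{a,i}^2 .
\]
So $\Phi(\bt)\ge \frac{1}{2|N|}\sum_a\|\bt_a\|^2$, and the sublevel sets of $\Phi$ are bounded. A minimizer $(\bt_1,\ldots,\bt_k)\ge 0$ therefore exists. It can also be approximated efficiently, but existence is all the proposition needs.

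\textbf{Optimality conditions.} Define $\bp:=\bx\odot(\bv-\btau)$. The gradient of the quadratic term with respect to $\bt_a$ is $-\bx\odot(\bv-\btau)=-\bp$, the same vector for every $a$. The KKT conditions for minimizing over the nonnegative orthant (valid by convexity, with Slater trivially satisfied) give, for each $a$, some $\bg_a\in\partial\Gamma_{\cM_a}(\bt_a)$ such that $\bg_a-\bp\ge0$ and $(g_{a,i}-p_i)t_{a,i}=0$ for all $i$.

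By \eqref{eq:subdiff}, $\bg_a=\bt_a\odot\by_a$ with $\by_a\in B(\cM_a)$ a maximizer of $(\bt_a\odot\bt_a)\cdot\by$. This yields two cases:
\begin{itemize}
\item If $t_{a,i}>0$, then $p_i=g_{a,i}$.
\item If $t_{a,i}=0$, then $g_{a,i}=0$, so $p_i\le 0$.
\end{itemize}

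\textbf{The main obstacle} is to upgrade this to $\bp\ge0$ and $\bp=\bg_a$ exactly, i.e.\ to rule out $p_i<0$. Suppose $p_i<0$ for some $i$. Then $\tau_i>v_i\ge0$, so $t_{a,i}>0$ for some $a$. For that $a$ the first case applies, giving $p_i=g_{a,i}=t_{a,i}y_{a,i}\ge0$, a contradiction. Hence $\bp\ge0$.

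Now for each $a$ and each $i$:
\begin{itemize}
\item If $t_{a,i}>0$, we already have $p_i=g_{a,i}$.
\item If $t_{a,i}=0$, then $0\le p_i\le g_{a,i}=0$, so $p_i=g_{a,i}$ as well.
\end{itemize}
Thus $\bp=\bg_a\in\partial\Gamma_{\cM_a}(\bt_a)$ for every $a$, which is \eqref{eq:common-subgradient}.

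Finally, $\bp\ge0$ and $x_i>0$ give $\btau\le\bv$. Identity \eqref{eq:exact-decomposition} is simply \eqref{eq:surplus} rewritten coordinatewise.
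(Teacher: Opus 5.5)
Your proposal is correct and follows the paper's proof essentially step for step. You minimize $\Phi$ over $(\R^N_{\ge0})^k$, get existence from coercivity via looplessness, and extract blockwise optimality conditions giving $\bg_a\in\partial\Gamma_{\cM_a}(\bt_a)$ with $g_{a,i}=p_i$ when $t_{a,i}>0$ and $g_{a,i}\ge p_i$ when $t_{a,i}=0$; you then rule out $p_i<0$ using $g_{a,i}=t_{a,i}y_{a,i}\ge0$ and conclude $\bg_a=\bp$. The only blemish is notational: in your coercivity chain, $i$ should be a coordinate maximizing $t_{a,i}$, which makes the bound the paper's $\Gamma_{\cM_a}(\bt_a)\ge\frac12\|\bt_a\|_\infty^2$; as written, the same index is used both fixed and summed.
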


\begin{proof}
Consider the convex function
\begin{equation}
\Phi(\bt_1,\ldots,\bt_k)
:=\sum_{a=1}^k\Gamma_{\cM_a}(\bt_a)
+\frac12\sum_{i\in N}x_i\left(v_i-\sum_{a=1}^k t_{a,i}\right)^2\label{eq:joint-program}
\end{equation}
over $(\R^N_{\ge0})^k$. Every $\cM_a$ is loopless. Hence every element belongs to a base, and
\[
\Gamma_{\cM_a}(\bt_a)\ge\frac12\|\bt_a\|_\infty^2.
\]
Thus $\Phi$ is coercive and has a minimizer.

Fix a minimizer, let $\btau:=\sum_a \bt_a$, and put $\bp:=\bx\odot(\bv-\btau)$. The constrained first-order
condition in block $\bt_a$ is
\[
0\in\partial\Gamma_{\cM_a}(\bt_a)-\bp+N_{\R^N_{\ge0}}(\bt_a),
\]
where $N_{\R^N_{\ge0}}(\bt_a)$ denotes the normal cone. A vector in this normal cone has coordinate
zero when $t_{a,i}>0$ and a nonpositive coordinate when $t_{a,i}=0$. Hence, for every $a$, there is a
vector $\bg_a\in\partial\Gamma_{\cM_a}(\bt_a)$ such that
\begin{equation}
g_{a,i}=p_i\quad\text{if }t_{a,i}>0,
\qquad
g_{a,i}\ge p_i\quad\text{if }t_{a,i}=0.\label{eq:boundary-optimality}
\end{equation}
By \eqref{eq:subdiff}, we may write $\bg_a=\bt_a\odot \by_a$ for some $\by_a\in B(\cM_a)$. In particular,
$g_{a,i}\ge0$, and $g_{a,i}=0$ whenever $t_{a,i}=0$.

Suppose $p_i<0$. The first condition in \eqref{eq:boundary-optimality} then rules out $t_{a,i}>0$
for every $a$. Thus $\tau_i=0$, and $p_i=x_iv_i\ge0$, a contradiction. Hence $\bp\ge0$. If
$t_{a,i}=0$, the second condition in \eqref{eq:boundary-optimality} now gives
$0=g_{a,i}\ge p_i\ge0$. Thus $g_{a,i}=p_i$ also at the boundary. We conclude that $\bg_a=\bp$ for every
$a$, which proves \eqref{eq:common-subgradient}. Finally, \eqref{eq:surplus} and $\bp\ge0$ give
$\btau\le \bv$, and \eqref{eq:exact-decomposition} follows.
\end{proof}

\subsection{The algorithm}
\textbf{Preprocessing.} Given $\bx\in\bigcap_{a=1}^kP(\cM_a)$ and $\bv\ge0$, let
$N_+:=\{i\in N:x_i>0\}$ and $\cM_a^+:=\cM_a|_{N_+}$. Apply
\Cref{prop:equilibrium} to the restricted instance to obtain
$\bt_1,\ldots,\bt_k$ and $\bp$, and set
$\tau_i:=\sum_{a=1}^k t_{a,i}$ for $i\in N_+$. For every $a$, let
$\widehat\cM_a^+$ be the direct-sum matroid from \eqref{eq:direct-sum} for
$(\cM_a^+,\bt_a)$. Extend it to a matroid $\widehat\cM_a$ on $N$ by making
every element of $N\setminus N_+$ a loop, and set $\tau_i:=v_i$ for those elements.
These parameters are fixed before any activations are revealed.

\begin{algorithm}[H]
\caption{Coordinated-price adversarial-order prophet inequality algorithm}
\label{alg:adversarial-prophet}
\begin{algorithmic}[1]
\Require Matroids $\cM_1,\ldots,\cM_k$, activation vector $\bx$, and
         values $\bv\ge0$
\Ensure A feasible accepted set $I$
\State Construct $\btau$ and $(\widehat\cM_a)_{a=1}^k$ by the preprocessing above
\State $I\gets\varnothing$
\For{each element $i$ in the fixed arrival order}
    \State Reveal $X_i$
    \If{$X_i=1$, $v_i>\tau_i$, and
         $I+i\in\bigcap_{a=1}^k\I(\widehat\cM_a)$}
        \State $I\gets I+i$
    \EndIf
\EndFor
\State \Return $I$
\end{algorithmic}
\end{algorithm}

\begin{theorem}[Adversarial-order ex-ante prophet inequality]
\label{thm:adversarial-prophet}
Let $\bx\in\bigcap_{a=1}^kP(\cM_a)$ and $\bv\in\R^N_{\ge0}$, and let
$X_i\sim\Ber(x_i)$ independently. For every fixed arrival order,
\Cref{alg:adversarial-prophet} returns a set
$I\in\bigcap_{a=1}^k\I(\cM_a)$ satisfying
\[
    \E[v(I)]\ge\frac{1}{k+1}\sum_{i\in N}x_iv_i.
\]
\end{theorem}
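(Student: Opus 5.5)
The plan follows the outline of \Cref{sec:detailed_overview_adversarial}, made rigorous with \Cref{lem:principal-blocks} and \Cref{prop:equilibrium}.

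\textbf{Feasibility.} Elements of $N\setminus N_+$ are never active, so they are never accepted; this also holds formally, since they are loops of $\widehat\cM_a$. Every accepted set is independent in each $\widehat\cM_a$, and $\widehat\cM_a^+\preceq\cM_a^+$ by \Cref{lem:principal-blocks}. Hence $I\in\bigcap_a\I(\cM_a)$. From now on I work on $N_+$ and write $P:=p(N_+)$.

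\textbf{Upper bound on the benchmark.} By \eqref{eq:exact-decomposition},
\[
\sum_i x_iv_i=P+\sum_a \bt_a\cdot\bx .
\]
Since $\bp\in\partial\Gamma_{\cM_a^+}(\bt_a)$ and $\restr{\bx}{N_+}\in P(\cM_a^+)$, \eqref{eq:price-budget} gives $\bt_a\cdot\bx\le P$. Therefore $\sum_i x_iv_i\le(k+1)P$, and it suffices to show $\E[v(I)]\ge P$.

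\textbf{Lower bound on the algorithm.} Let $A$ be the random set of elements $i\in N_+$ with $p_i>0$ such that $I_{\text{before }i}+i$ is independent in every $\widehat\cM_a$ at the time $i$ arrives. Whether $i\in A$ depends only on the activations of earlier elements, so it is independent of $X_i$.

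\begin{itemize}
\item Condition $p_i>0$ means $v_i>\tau_i$. Hence $i\in A$ together with $X_i=1$ is exactly the event that $i$ is accepted among the elements with $p_i>0$.
\item Elements with $p_i=0$ may also be accepted, but they only add nonnegative value $v_i\ge\tau_i$.
\item Using independence and \eqref{eq:exact-decomposition}, I get
\[
\E[v(I)]\ge \E\Bigl[\sum_{i\in A}x_iv_i\Bigr]=\E\bigl[p(A)\bigr]+\E\Bigl[\sum_{i\in A}x_i\tau_i\Bigr]\ge \E[p(A)+\tau(I)].
\]
\end{itemize}

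The last step needs care. I will use $\E[\tau(I)]=\E\bigl[\sum_{i \text{ accepted}}\tau_i\bigr]$, and every accepted $i$ contributes $x_i\tau_i$ in expectation given that it is available. For the accepted elements with $p_i=0$, I should define $A$ to include all available elements with $v_i\ge\tau_i$. Then $p_i=0$ contributes nothing to $p(A)$ and the identity holds cleanly. Concretely, I set $A:=\{i\text{ available}, v_i>\tau_i\}$, which gives $\E[v(I)]=\E[\sum_{i\in A}x_iv_i]=\E[p(A)]+\E[\tau(I)]$.

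\textbf{Charging the blocked surplus.} It remains to show $p(N_+\setminus A)\le\tau(I)$ pointwise. An element $i$ with $p_i=0$ costs nothing. Otherwise $v_i>\tau_i$ and $i\notin A$, so at its arrival $I+i$ was dependent in some $\widehat\cM_a$.

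\begin{itemize}
\item In a direct sum, dependence is blockwise. So if $i\in U_{a,j}$, then $i\in\cl_{\cN_{a,j}}(I\cap U_{a,j})$, where $I$ denotes the final set and I use monotonicity of closure.
\item Let $C_{a,j}:=\cl_{\cN_{a,j}}(I\cap U_{a,j})$. By \eqref{eq:block-density},
\[
p(C_{a,j})\le\alpha_{a,j}\,r_{\cN_{a,j}}(C_{a,j})=\alpha_{a,j}\,|I\cap U_{a,j}|=t_a(I\cap U_{a,j}).
\]
\item Each blocked element lies in some $C_{a,j}$ and $\bp\ge0$, so
\[
p(N_+\setminus A)\le\sum_{a,j}p(C_{a,j})\le\sum_a t_a(I)=\tau(I).
\]
\end{itemize}

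Combining, $\E[v(I)]\ge\E[p(A)+p(N_+\setminus A)]=P$.

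\textbf{Expected obstacle.} The main bookkeeping risk is the availability/independence step: I must check that availability is measurable with respect to earlier activations, and that the decomposition correctly handles elements with $v_i=\tau_i$ and the loops outside $N_+$. The matroid content is already supplied by the lemma and the proposition.
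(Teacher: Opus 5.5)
Your proof is correct and follows essentially the same route as the paper. Both use the decomposition $x_iv_i=p_i+x_i\tau_i$, the factorization $\bP(i\in I)=x_i\bP(i\text{ available})$, the pointwise charging $p(C)\le\tau(I)$ via closures in the blocks $\cN_{a,j}$, and the budget bound $\bt_a\cdot\bx\le P$ from \eqref{eq:price-budget}.

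There are only cosmetic differences. You cover the blocked elements by the full closures $\cl_{\cN_{a,j}}(I\cap U_{a,j})$ and use $\bp\ge0$, whereas the paper assigns each blocked element to a single block. Also, your worry about accepted elements with $p_i=0$ is moot: on $N_+$ we have $p_i=0$ iff $v_i=\tau_i$, and the algorithm's strict test $v_i>\tau_i$ never accepts such elements.
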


\begin{proof}
Work on $N_+$ and suppress the superscript ${}^+$. Put
\[
P:=p(N),
\qquad
Q:=\{i:p_i>0\}.
\]
By \eqref{eq:surplus}, $Q=\{i:v_i>\tau_i\}$. More, \Cref{lem:principal-blocks} gives
$\widehat \cM_a\preceq \cM_a$ on the reduced ground set.

By construction, \Cref{alg:adversarial-prophet} accepts an element precisely when it is active,
belongs to $Q$, and is feasible in every $\widehat \cM_a$. It is therefore feasible in every original matroid. We next prove the estimate
which pays for blocked surplus.

For an element $i\in Q$, call $i$ available if its addition to the set accepted before its arrival is
independent in every $\widehat \cM_a$. Availability is defined without regard to whether $i$ is active.
Let $C$ be the set of elements in $Q$ which are unavailable when they arrive. We claim that, for
every realization,
\begin{equation}
p(C)\le\tau(I).\label{eq:blocked-surplus}
\end{equation}
For every $i\in C$, choose one matroid $a$ in which it is blocked. Let $U_{a,j}$ be the price block
containing $i$, let $\cN_{a,j}=(\cM_a/A_{a,j-1})|_{U_{a,j}}$, and assign $i$ to the pair $(a,j)$. Denote the
assigned set by $C_{a,j}$. If $I_{<i}$ is the set accepted before $i$ arrives, then
\[
i\in\cl_{\cN_{a,j}}(I_{<i}\cap U_{a,j}).
\]
Closure is monotone, and hence
\[
C_{a,j}\subseteq\cl_{\cN_{a,j}}(I\cap U_{a,j}).
\]
Since $I\cap U_{a,j}$ is independent in $\cN_{a,j}$,
\[
r_{\cN_{a,j}}(C_{a,j})\le|I\cap U_{a,j}|.
\]
The local density inequality \eqref{eq:block-density} now gives
\[
p(C_{a,j})\le\alpha_{a,j} r_{\cN_{a,j}}(C_{a,j})
\le\alpha_{a,j}|I\cap U_{a,j}|.
\]
Summing over all pairs $(a,j)$ proves
\[
p(C)\le\sum_{a=1}^k\sum_j\alpha_{a,j}|I\cap U_{a,j}|
=\sum_{a=1}^k t_a(I)=\tau(I),
\]
which is \eqref{eq:blocked-surplus}.

For $i\in Q$, let $A_i$ be the event that $i$ is available. It depends only
on activity indicators revealed before the arrival of $i$, and is independent
of $X_i$. Thus
\[
\bP(i\in I)=x_i\bP(A_i).
\]
Using \eqref{eq:exact-decomposition},
\begin{align*}
\E[v(I)]
&=\sum_{i\in Q}(p_i+x_i\tau_i)\bP(A_i)\\
&=P-\E[p(C)]+\E[\tau(I)] \ge P,
\end{align*}
where the last inequality follows from \eqref{eq:blocked-surplus}.
It remains to compare $P$ with the ex-ante value. For every $a$, $\bx\in P(\cM_a)$, so
\eqref{eq:price-budget} gives
\[
\bt_a \cdot \bx \le p(N)=P.
\]
Summing these inequalities and using \eqref{eq:exact-decomposition},
\[
\sum_{i\in N} x_i v_i = P+\sum_{a=1}^k \bx\cdot \bt_a\le(k+1)P.
\]
Together with the preceding inequality, this proves the guarantee on the
support of $\bx$. The remaining elements are loops in every
$\widehat\cM_a$ and have $x_i=0$, so they are never accepted and contribute
nothing to the benchmark.
\end{proof}

\section[Prophet Secretary]{Prophet Secretary}
\label{sec:algorithm}

Fix $\bx\in\scr P_\cap$ and nonnegative values $\bv$.  For a feasible set
$A\in\scr F$, let $\scr M_h/A_h$ denote contraction by the local restriction
$A_h$ defined in \Cref{sec:matroids-matchoids}.
Define the residual relaxation
\begin{equation}
\begin{split}
    \scr{P}(A):=\bigl\{\by\in\mb{R}_{\ge0}^N:
       &\ y_i=0\text{ for }i\in A,\\
       &\ \restr{\by}{N_h\setminus A_h}
          \in P(\scr{M}_h/A_h)
          \text{ for every }h\in H\bigr\}.
\end{split}
    \label{eq:residual-relaxation}
\end{equation}
The relaxation is downward closed.  Moreover,
$\bx\in\scr{P}(\varnothing)$ by \eqref{eq:local-relaxation}.
Define the residual value
\begin{equation}
    R(A):=\max\bigl\{\bv\cdot\by:\by\in\scr{P}(A),\ 0\le\by\le\bx\bigr\}.
    \label{eq:residual-value}
\end{equation}
The bound $\by\le\bx$ is coordinatewise.  If $i\notin A$ and
$A+i=A\cup\{i\}$ is feasible, define the residual drop
\begin{equation}
    \Delta_i(A):=R(A)-R(A+i)\ge0.
    \label{eq:residual-drop}
\end{equation}
The inequality follows from contraction and that the relaxation is downward closed.
Let
\begin{equation}
    \gamma(t):=\frac{1-e^{-k(1-t)}}{k},
    \qquad t\in[0,1].
    \label{eq:gamma}
\end{equation}
This function satisfies
\begin{equation}
    \gamma(1)=0,
    \qquad
    1-k\gamma(t)+\gamma'(t)=0,
    \qquad
    \gamma(0)=\frac{1-e^{-k}}{k}.
    \label{eq:gamma-ode}
\end{equation}

\begin{algorithm}[H]
\caption{Residual-price prophet secretary algorithm}
\label{alg:prophet}
\begin{algorithmic}[1]
\Require Local matroids $(\scr{M}_h)_{h\in H}$,
         activation vector $\bx$, and values $\bv\ge0$
\Ensure A feasible accepted set $I$
\State Draw independent clocks $T_i\sim\operatorname{Unif}[0,1]$ and process
       elements in increasing clock order
\State $I\gets\varnothing$
\For{each element $i$ arriving at time $t=T_i$}
    \State Reveal $X_i$
    \If{$X_i=1$ and $I+i\in\scr{F}$}
        \State Compute $\Delta_i(I)=R(I)-R(I+i)$ using
               \eqref{eq:residual-value}
        \If{$v_i\ge\gamma(t)\Delta_i(I)$}
            \State $I\gets I+i$
        \EndIf
    \EndIf
\EndFor
\State \Return $I$
\end{algorithmic}
\end{algorithm}

We use the random-clock representation from \Cref{sec:ocrs}.  If the input is
presented as a permutation, assign the sorted values of $|N|$ independent
uniform clocks to its successive positions.

\subsection{An Exchange Lemma for Matchoids}
\label{sec:structural}

We now prove the structural inequality that replaces the single-matroid
critical-value bound in the Lee--Singla analysis. We start with the following standard lemma:

\begin{lemma}[Exchange map]
\label{lem:exchange}
Let $\scr{M}=(N,\scr{I})$ be a matroid and let $I,J\in\scr{I}$.
There is a map $\sigma_{I,J}:J\to I\cup\{\bot\}$ such that
\begin{enumerate}[label=(\roman*),leftmargin=2em]
    \item $(I\setminus\{\sigma_{I,J}(i)\})\cup\{i\}\in\scr{I}$ for every
          $i\in J$, where removing $\bot$ has no effect;
    \item each element of $I$ has at most one preimage; and
    \item $\sigma_{I,J}(i)=i$ whenever $i\in I\cap J$.
\end{enumerate}
\end{lemma}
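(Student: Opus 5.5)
Our plan is to reduce to the classical bijective basis exchange lemma and then trim. First we extend $I$ to a maximal independent subset of $I\cup J$ that contains $I$; call it $B_I$. Then we extend $I\cap J$ to a maximal independent subset $B_J$ of $I\cup J$ that contains $J$; concretely, we extend $J$ inside $I\cup J$ (note $J\supseteq I\cap J$). Both $B_I$ and $B_J$ are bases of the restriction $\scr{M}|_{I\cup J}$. The standard result (Brualdi's bijective exchange) then gives a bijection $\pi:B_J\to B_I$ with $B_I-\pi(e)+e\in\scr{I}$ for every $e\in B_J$, and it can be taken to be the identity on $B_I\cap B_J$. For completeness, we would recall how it is proved: apply Hall's theorem to the bipartite graph on $B_J\setminus B_I$ versus $B_I\setminus B_J$, where $e$ is adjacent to $f$ whenever $f$ lies in the fundamental circuit of $e$ with respect to $B_I$. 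The Hall condition follows from rank submodularity: for $S\subseteq B_J\setminus B_I$, the circuits of $S$ lie in $S\cup N(S)\cup(B_I\cap B_J)$, so $r(S\cup N(S)\cup(B_I\cap B_J))=|N(S)|+|B_I\cap B_J|$, and this rank is at least $|S|+|B_I\cap B_J|$ because $S\cup(B_I\cap B_J)\subseteq B_J$ is independent.

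Next we define the map. For $i\in J\subseteq B_J$, we set $\sigma_{I,J}(i):=\pi(i)$ if $\pi(i)\in I$, and $\sigma_{I,J}(i):=\bot$ if $\pi(i)\in B_I\setminus I$.

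We then verify the three properties in turn.
\begin{enumerate}[label=(\roman*),leftmargin=2em]
\item In the first case, $(I\setminus\{\pi(i)\})\cup\{i\}\subseteq B_I-\pi(i)+i$, which is independent. In the second case, $I+i\subseteq B_I-\pi(i)+i$, which is also independent.
\item Injectivity follows from the injectivity of $\pi$.
\item For $i\in I\cap J$, we have $i\in B_I\cap B_J$, so $\pi(i)=i\in I$.
\end{enumerate}

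The only delicate point is making sure that $\pi$ fixes $B_I\cap B_J$, and in particular fixes $I\cap J$. Restricting Hall's matching to the symmetric difference $B_J\setminus B_I$ versus $B_I\setminus B_J$, as above, handles this directly. The case where $i\in J$ is already spanned-free is absorbed automatically into the $\bot$ case.
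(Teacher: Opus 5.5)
Your proposal is correct and is essentially the paper's proof: extend $I$ and $J$ to bases (you do so inside $\scr{M}|_{I\cup J}$, the paper does so in $\scr{M}$), take Brualdi's bijection fixing $B_I\cap B_J$, and send $i$ to $\bot$ when its partner lies outside $I$. Your Hall-theorem sketch of Brualdi's lemma is fine, though it uses only $S\subseteq\cl(N(S)\cup(B_I\cap B_J))$ and rank monotonicity rather than submodularity; your closing remark about the $\bot$ case is imprecise, since $\pi(i)$ can lie in $I$ even when $I+i\in\scr{I}$, but nothing in the proof depends on it.
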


\begin{proof}
Extend $I$ and $J$ to bases $B_I$ and $B_J$ of $\scr{M}$.  The bijective
basis-exchange theorem gives a bijection $\phi:B_J\to B_I$ that fixes
$B_I\cap B_J$ and for which $B_I-\phi(i)+i$ is a basis for every
$i\in B_J$; see \citet{Brualdi1969}.  For $i\in J$, set
\[
    \sigma_{I,J}(i)
    :=\begin{cases}
        \phi(i),&\phi(i)\in I,\\
        \bot,&\phi(i)\notin I.
      \end{cases}
\]
Injectivity away from $\bot$ follows from that of $\phi$, and every element of $I\cap J$ is
fixed.  If $\phi(i)\in I$, then $I-\phi(i)+i$ is a subset of the exchanged
basis.  If $\phi(i)\notin I$, then $I+i$ is a subset of that basis.  Both sets
are independent.
\end{proof}

The identity property in \Cref{lem:exchange} is important.  When $i\in I$,
the exchanged residual set must remove $i$ before $i$ is contracted; otherwise
the purported vector in the contracted polytope could retain a positive
$i$-coordinate.

\begin{lemma}[Matchoid charging]
\label{lem:charging}
Let $A\in\scr{F}$, and let $\by$ be an optimal solution of
\eqref{eq:residual-value}.  Then
\begin{equation}
    \sum_{i\notin A:\,A+i\in\scr{F}}y_i\Delta_i(A)
       \le kR(A).
    \label{eq:charging}
\end{equation}
\end{lemma}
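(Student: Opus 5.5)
The plan is to lower-bound each $R(A+i)$ by exhibiting an explicit feasible point of $\scr P(A+i)$ built from $\by$. We then sum the resulting upper bounds on $\Delta_i(A)$, weighted by $y_i$, and charge the lost mass to coordinates of $\by$ so that each coordinate $j$ is charged at most $k$ times its value.

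\textbf{Step 1: a fractional exchange in each local matroid.} Fix $h\in H$. By definition of $\scr P(A)$, the vector $\restr{\by}{N_h\setminus A_h}$ lies in $P(\scr M_h/A_h)$. Write it as a convex combination $\sum_J\lambda^h_J\one_J$ of independent sets $J$ of $\scr M_h/A_h$.

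For an element $i\in N_h\setminus A_h$ with $A+i\in\scr F$, the set $\{i\}$ is independent in $\scr M_h/A_h$. Apply \Cref{lem:exchange} in $\scr M_h/A_h$ to the pair $(J,\{i\})$, with $J$ in the role of the set that receives the exchange. This gives some $\sigma^h_J(i)\in J\cup\{\bot\}$ such that $J-\sigma^h_J(i)+i$ is independent in $\scr M_h/A_h$. Then $J-\sigma^h_J(i)-i$ is independent in $\scr M_h/(A_h+i)$.

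Including $-i$ in that set is exactly why the identity property (iii) of \Cref{lem:exchange} matters: when $i\in J$ we have $\sigma^h_J(i)=i$, so $i$ is removed before it is contracted.

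\textbf{Step 2: the feasible point for $A+i$.} Define $\bz^{(i)}$ by setting $z^{(i)}_i=0$ and, for $j\neq i$,
\[
z^{(i)}_j=y_j-\sum_{h\in H(i)}\sum_J\lambda^h_J[\sigma^h_J(i)=j],
\]
truncated at $0$. This point is dominated coordinatewise, in each $h\in H(i)$, by $\sum_J\lambda^h_J\one_{J-\sigma^h_J(i)-i}\in P(\scr M_h/(A_h+i))$. Since $P(\cdot)$ is down-closed, $\bz^{(i)}$ satisfies the constraint of every $h\in H(i)$. For $h\notin H(i)$, the local matroid is unchanged and $\bz^{(i)}\le\by$, so those constraints also hold. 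Together with $\bz^{(i)}\le\by\le\bx$, this gives that $\bz^{(i)}$ is feasible for $R(A+i)$.

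Writing $\mu^h_{ij}:=\sum_J\lambda^h_J[\sigma^h_J(i)=j]$, we get
\[
\Delta_i(A)\le \bv\cdot(\by-\bz^{(i)})\le \sum_{h\in H(i)}\sum_{j\ne i}\mu^h_{ij}v_j+v_iy_i .
\]
We can sharpen this using $\sigma^h_J(i)=i$ whenever $i\in J$. The extra $v_iy_i$ term is then already accounted for by the mass $\mu^h_{ii}$ (note $\sum_J\lambda^h_J[i\in J]=y_i$). So the cleaner bound is
\[
\Delta_i(A)\le\min_{h\in H(i)}\sum_j\mu^h_{ij}v_j+\sum_{h\in H(i)}\sum_{j\ne i}\mu^h_{ij}v_j .
\]

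\textbf{Step 3: summation and double counting (the main obstacle).} Multiply by $y_i$ and sum over $i$. The difficulty is that the multiplier $y_i$ times the charge $\mu^h_{ij}$ must total at most $y_j$ per matroid $h$ containing $j$.

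Injectivity (ii) of \Cref{lem:exchange} gives $\sum_i[\sigma^h_J(i)=j]\le1$ only for a fixed choice of exchanges against a fixed target set. Here, however, the exchanges are computed for singletons $\{i\}$ separately, so injectivity is not automatic.

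My first attempt would be to avoid per-$i$ exchanges. Instead, take a second convex decomposition of $\restr{\by}{N_h\setminus A_h}$ into sets $K$, and use a single map $\sigma^h_{J,K}:K\to J\cup\{\bot\}$ for each pair $(J,K)$. For $i\in K$, use the exchange $\sigma^h_{J,K}(i)$, averaged with weight $\lambda_K/y_i$ over the sets $K\ni i$. The factor $y_i$ then cancels against $\Pr[i\in K]$. Property (ii) gives
\[
\sum_i y_i\mu^h_{ij}\le\sum_{J,K}\lambda_J\lambda_K[j\in J]=y_j ,
\]
so summing $v_j$ over the at most $k$ matroids containing $j$ yields the bound $k\,\bv\cdot\by=kR(A)$. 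The self-term for $i$ must be absorbed inside one of the $k$ charges via property (iii), namely $\sigma(i)=i$ for $i\in J\cap K$, so that no $(k+1)$-st charge appears. Getting this bookkeeping exactly right, with the charge of $i$ to itself counted within the $k$ matroids, is the step I expect to need the most care.
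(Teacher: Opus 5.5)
Your Step 3 construction (two independent decompositions of $\restr{\by}{N_h\setminus A_h}$, one exchange map $\sigma^h_{J,K}$ per pair, and averaging over $K\ni i$ with weights $\lambda_K/y_i$) is exactly the paper's proof: your $K$ and $J$ are its $J$ and $I$, your averaged charges $\mu^h_{ij}$ are the coordinates of its $\brho_h(i)$, and your Step 2 feasibility argument carries over once you average over $K$ by convexity, so the per-singleton exchanges of Steps 1--2 can simply be discarded. The self-term needs no special bookkeeping, because property (iii) gives $\mu^h_{ii}=y_i$ for every $h\in H(i)$, so $\Delta_i(A)\le\sum_{h\in H(i)}\sum_j\mu^h_{ij}v_j$ follows directly; drop your $\min_h$ ``cleaner bound'', which counts one matroid's off-diagonal charges twice, and then $\sum_i y_i\mu^h_{ij}\le y_j$ (with $i=j$ included) gives $kR(A)$ as in the paper.
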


\begin{proof}
Fix a local component $h\in H$.  Write
\[
    \scr{N}_h:=\scr{M}_h/A_h,
    \qquad
    \bar{\by}_h:=\restr{\by}{N_h\setminus A_h}.
\]
By feasibility, $\bar{\by}_h\in P(\scr{N}_h)$.  The independent-set
polytope of a matroid is integral, so choose a distribution $\nu_h$ over
independent sets of $\scr{N}_h$ whose marginal vector is
$\bar{\by}_h$.  Let $I$ and $J$ be independent samples from $\nu_h$.

For $J$ in the support of $\nu_h$ and $i\in J$, apply
\Cref{lem:exchange} to $I,J$ and define the vector
\begin{equation}
    \bdelta_h(J,i)
      :=\E_{I\sim\nu_h}
          \bigl[\one_{\{\sigma_{I,J}(i)\}}\bigr],
    \label{eq:delta}
\end{equation}
where $\one_{\{\bot\}}$ is the zero vector.  Injectivity of the exchange map away from $\bot$
gives, coordinatewise,
\begin{equation}
    \sum_{i\in J}\bdelta_h(J,i)
       \le \E_I[\one_I]
       =\bar{\by}_h.
    \label{eq:delta-sum}
\end{equation}
Moreover,
\begin{equation}
    \restr{\bar{\by}_h-\bdelta_h(J,i)}{(N_h\setminus A_h)-i}
      =\E_I\!\left[
          \restr{\one_{I-\sigma_{I,J}(i)}}{(N_h\setminus A_h)-i}
        \right]
      \in P(\scr{N}_h/\{i\}).
    \label{eq:exchange-residual}
\end{equation}
Indeed, every set in the expectation becomes independent after adding $i$.
It also excludes $i$: if $i\in I$, property (iii) of
\Cref{lem:exchange} removes it.

For $i\in N_h\setminus A_h$ with $y_i>0$, define
\begin{equation}
    \brho_h(i)
      :=\E_{J\sim\nu_h}
          [\bdelta_h(J,i)\mid i\in J].
    \label{eq:rho-vector}
\end{equation}
Then convexity and \eqref{eq:exchange-residual} imply
\begin{equation}
    \restr{\bar{\by}_h-\brho_h(i)}{(N_h\setminus A_h)-i}
      \in P(\scr{N}_h/\{i\}).
    \label{eq:rho-feasible}
\end{equation}
Set $\brho_h(i):=0$ when $y_i=0$.  From now on, regard both
$\brho_h(i)$ and $\bar{\by}_h$ as vectors in $\R^N$ by extending them by zero
outside $N_h\setminus A_h$.  For a coordinate $i$ with $y_i>0$, define
\begin{equation}
    \by^{(i)}
      :=\left(\by-\sum_{h\in H(i)}\brho_h(i)\right)_+,
    \label{eq:candidate}
\end{equation}
where the positive part is coordinatewise.  Positive mass on $i$ implies
that $i$ is not a loop in any residual component containing it, so
$A+i\in\scr{F}$.

For every $h\in H(i)$, the vector $\bar{\by}_h-\brho_h(i)$ is nonnegative,
and its restriction to $(N_h\setminus A_h)-i$ belongs to
$P(\scr{N}_h/\{i\})$.  Moreover,
$\by^{(i)}\le \by-\brho_h(i)$ on this local ground set, because all the other
terms subtracted in \eqref{eq:candidate} are nonnegative.  Downward
closedness therefore gives
\[
    \restr{\by^{(i)}}{(N_h\setminus A_h)-i}
       \in P(\scr{N}_h/\{i\}).
\]
For a component not containing $i$, the residual matroid is unchanged and
$\by^{(i)}\le \by$ preserves feasibility.  The vector also satisfies
$\by^{(i)}\le \by\le\bx$.  Finally, its $i$-coordinate is zero.  Conditional on
$i\in J$, the exchange map charges coordinate $i$ exactly when $i\in I$, and
hence
\[
    [\brho_h(i)]_i=\bar y_{h,i}=y_i
    \qquad(h\in H(i)).
\]
Thus $\by^{(i)}$ is feasible for the LP defining $R(A+i)$.

It follows that
\begin{equation}
\begin{split}
    \Delta_i(A)
      &\le \bv\cdot(\by-\by^{(i)})\\
      &\le \sum_{h\in H(i)}\bv\cdot\brho_h(i).
\end{split}
    \label{eq:individual-charge}
\end{equation}
For a fixed component $h$, conditioning and \eqref{eq:delta-sum} give
\begin{equation}
\begin{split}
    \sum_{i\in N_h\setminus A_h}\bar y_{h,i}\brho_h(i)
      &=\E_{J\sim\nu_h}
          \left[\sum_{i\in J}\bdelta_h(J,i)\right]\\
      &\le\bar{\by}_h
      \qquad\text{coordinatewise.}
\end{split}
    \label{eq:rho-sum}
\end{equation}
Multiply \eqref{eq:individual-charge} by $y_i$ and sum.  Coordinates with
$y_i=0$ contribute nothing, while all positive coordinates are feasible
extensions.  Since $y_i=\bar y_{h,i}$ for $i\in N_h\setminus A_h$,
\begin{equation}
\begin{split}
    \sum_{i\notin A:\,A+i\in\scr{F}} y_i\Delta_i(A)
      &\le \sum_{h\in H}
          \sum_{i\in N_h\setminus A_h}
             \bar y_{h,i}\,\bv\cdot\brho_h(i)\\
      &\le \sum_{h\in H}\bv\cdot\bar{\by}_h\\
      &=\sum_{h\in H}\sum_{i\in N_h\setminus A_h}v_iy_i\\
      &=\sum_{i\in N}d(i)v_iy_i\\
      &\le k\,\bv\cdot\by
       =kR(A).
\end{split}
\end{equation}
This proves \eqref{eq:charging}.
\end{proof}

\subsection{Random-Order Analysis}
\label{sec:analysis}

Let $I(t)$ be the set of accepted elements of \Cref{alg:prophet} strictly before time $t \in [0,1]$. 
More, denote by $I$ the final accepted set after every element has arrived,
where $v(I)= \sum_{i \in I} v_i$ is the value of $I$.

We first decompose $v(I)$ into utility and revenue contributions.
Recalling the definition of $\Delta_i(I(T_i))$ from \eqref{eq:residual-drop}, 
set $\Util = \sum_{i \in I} (v_i - \gamma(T_i)\Delta_i(I(T_i)))$ to be the utility
and $\Rev = \sum_{i \in I} \gamma(T_i)\Delta_i(I(T_i))$ to be the revenue.
Then,
\begin{equation} \label{eqn:utility_rev_decomp}
    v(I) =\Util+\Rev.
\end{equation}
For our analysis, it will be convenient to define
\begin{equation}
    r(t):=\E[R(I(t))],
\end{equation}
where $R(I(t))$ is the residual value of $I(t)$ as defined in \eqref{eq:residual-value}.

We next lower bound $\E[\Util]$. Recalling that $\scr{F}$ is the feasibility constraint, we first specify the following function
for $i \in N$, $A \subseteq N$ and $t \in [0,1]$ as follows:
\[
    u_i(A,t):=
    \begin{cases}
      x_i\pos{v_i-\gamma(t)\Delta_i(A)},
        &i\notin A\text{ and }A+i\in\scr{F},\\
      0,&\text{otherwise}.
    \end{cases}
\]
Given this definition, we first observe that we can remove the conditioning on $T_i = t$ in the following way:
\begin{proposition} \label{prop:conditioning}
For each $i \in N$ and $t \in [0,1]$,
\begin{equation}
    \E[u_i(I(t),t)\mid T_i=t]\ge \E[u_i(I(t),t)].
    \label{eq:conditioning}
\end{equation}

\end{proposition}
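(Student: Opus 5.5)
The plan is a coupling argument. We compare the process conditioned on $T_i=t$ with the unconditioned process, show that conditioning can only shrink the set $I(t)$, and then show that $u_i(\cdot,t)$ is antimonotone in its set argument, at least along the coupling.

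\emph{Step 1 (coupling).} Under the conditioning $T_i=t$, element $i$ has not yet arrived strictly before time $t$. So $I(t)$ is the output, up to time $t$, of \Cref{alg:prophet} run on $N-i$ with the other clocks and activations. Call this set $I^{-i}(t)$. In the unconditioned process, $i$ arrives before $t$ with probability $t$ and may or may not be accepted. Couple both processes on the same clocks $(T_j)_{j\ne i}$ and activations $(X_j)$, and let the unconditioned process draw $T_i$ independently. Then $I(t)=I^{-i}(t)$ on $\{T_i\ge t\}$, while on $\{T_i<t\}$ we have either $I(t)=I^{-i}(t)$ or $i\in I(t)$. In the latter case $i\in I(t)$ gives $u_i(I(t),t)=0$ by definition.

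\emph{Step 2 (what must be compared).} Since $u_i\ge 0$, the claim follows if we show that on every realization, either the two sets coincide or the unconditioned value is $0$. Step 1 suggests exactly this, but it is not quite right. Once $i$ is accepted at time $T_i<t$, later arrivals in $(T_i,t)$ may behave differently, because prices $\Delta_j$ depend on $I$. Still, $i\in I(t)$ persists, since acceptances are irrevocable. So in every branch where $i$ was accepted, $u_i(I(t),t)=0$ regardless of what else changed. In every branch where $i$ was not accepted (either $T_i\ge t$, or $i$ was inactive, infeasible, or rejected by the price), the algorithm's trajectory on $N-i$ is identical to the $i$-free run. This holds because rejecting $i$ leaves $I$ unchanged, and all later decisions depend only on $I$ and on the other elements' data. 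Hence in these branches $I(t)=I^{-i}(t)$ pointwise.

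\emph{Step 3 (conclude).} Conditional on $T_i=t$, $I(t)=I^{-i}(t)$, and $I^{-i}(t)$ is independent of $T_i$ and $X_i$. So
\[
\E[u_i(I(t),t)\mid T_i=t]=\E[u_i(I^{-i}(t),t)].
\]
Pointwise, Step 2 gives $u_i(I(t),t)\le u_i(I^{-i}(t),t)$ in the unconditioned process. Taking expectations yields \eqref{eq:conditioning}.

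The only delicate point is the pathwise claim in Step 2, namely that a rejected or not-yet-arrived $i$ leaves the rest of the trajectory unchanged. This needs the randomness of other elements (clocks, activations) to be indexed per element, not per arrival position. The random-clock representation guarantees this, and no monotonicity of $R$ or $\Delta_i$ in the set argument is needed.
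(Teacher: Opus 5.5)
Your argument is correct and is essentially the paper's (Lee--Singla) coupling: compare with the $i$-free run, observe that $u_i(I(t),t)=0$ whenever $i$ was accepted before time $t$, and otherwise the accepted sets strictly before $t$ coincide pathwise. Your opening framing, that conditioning ``shrinks'' $I(t)$ and that $u_i$ is antimonotone, is not needed and fails in general once $i$ has been accepted. Your Step~2 rightly drops it in favor of exactly this case split.
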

\begin{proof}
This is precisely the argument of \cite{Lee2018}, and so we just sketch the details.
We can couple two executions, one with $T_i =t$, and the other being the usual execution.
If $i$ is accepted in the latter execution when $T_i < t$, then $u_i(I(t),t) = 0$, and so
the sample path version of \eqref{eq:conditioning} holds.
In all the other cases (i.e., $T_i \ge t$, or $T_i < t$ and $i$ is rejected),
the two executions have the same accepted set strictly before time $t$, and so again the sample path version of \eqref{eq:conditioning} holds.  
\end{proof}

\begin{lemma}[Utility]
\label{lem:utility}
The expected utility of \Cref{alg:prophet} satisfies
\[
    \E[\Util]\ge
       \int_0^1\bigl(1-k\gamma(t)\bigr)r(t)\,dt.
\]
\end{lemma}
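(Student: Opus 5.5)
We follow the Lee--Singla utility argument, with \Cref{lem:charging} replacing the single-matroid critical-value bound. First we express the expected utility as an integral over arrival times. Element $i$ is accepted exactly when $X_i=1$, $I(T_i)+i\in\scr F$ and $v_i\ge\gamma(T_i)\Delta_i(I(T_i))$. In that case its utility contribution is $\pos{v_i-\gamma(T_i)\Delta_i(I(T_i))}$, and otherwise it contributes $0$. The set $I(T_i)$ depends only on the clocks and activations of the other elements, so $X_i$ is independent of $(T_i,I(T_i))$. Since $T_i$ is uniform on $[0,1]$,
\[
\E[\Util]=\sum_{i\in N}\int_0^1\E\bigl[u_i(I(t),t)\mid T_i=t\bigr]\,dt\ \ge\ \sum_{i\in N}\int_0^1\E\bigl[u_i(I(t),t)\bigr]\,dt,
\]
where the inequality is \Cref{prop:conditioning}. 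It then suffices to show the pointwise bound $\sum_i u_i(A,t)\ge(1-k\gamma(t))R(A)$ for every $A\in\scr F$ and every $t$, and to integrate it after taking expectations (which gives $r(t)$).

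To prove the pointwise bound, fix $A$ and $t$, and let $\by$ be an optimal solution of \eqref{eq:residual-value}, so that $R(A)=\bv\cdot\by$, $y_i=0$ for $i\in A$, and $0\le\by\le\bx$. Because $\pos{z}\ge z$ and $x_i\ge y_i$, for every $i\notin A$ with $A+i\in\scr F$ we get
\[
u_i(A,t)=x_i\pos{v_i-\gamma(t)\Delta_i(A)}\ge y_i\bigl(v_i-\gamma(t)\Delta_i(A)\bigr).
\]
We also need to handle elements with $y_i>0$ but $A+i\notin\scr F$, since these have $u_i=0$ while contributing $v_iy_i$ to $R(A)$. The proof of \Cref{lem:charging} shows that $y_i>0$ forces $i$ to be a nonloop in every $\scr M_h/A_h$ with $h\in H(i)$, and hence $A+i\in\scr F$. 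So $\sum_{i\notin A:\,A+i\in\scr F}y_iv_i=\bv\cdot\by=R(A)$. Summing the displayed inequality and applying \eqref{eq:charging} gives
\[
\sum_i u_i(A,t)\ge R(A)-\gamma(t)\sum_{i\notin A:\,A+i\in\scr F}y_i\Delta_i(A)\ge(1-k\gamma(t))R(A).
\]
This uses $\gamma(t)\ge0$, which holds by \eqref{eq:gamma}.

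We expect the only delicate point to be the independence and conditioning in the first step. Conditioning on $T_i=t$ changes the distribution of $I(t)$, and \Cref{prop:conditioning} is exactly what lets us remove that conditioning. We also need to check that conditioning on $T_i=t$ leaves $X_i$ independent of $I(t)$, which holds because $i$ has not yet arrived when $I(t)$ is formed. Apart from this, the argument is a direct pointwise application of \Cref{lem:charging}.
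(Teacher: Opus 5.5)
Your proposal is correct and matches the paper's proof step for step. Both use the conditioning-and-integration step via \Cref{prop:conditioning}, the observation that only coordinates with $i\notin A$ and $A+i\in\scr F$ can carry positive $y_i$, the bound via $x_i\ge y_i$ and $\pos{z}\ge z$, and then \Cref{lem:charging}. The only cosmetic difference is that the paper re-derives "$y_i>0$ forces $A+i\in\scr F$" directly from the rank bound $y_i\le r_{\scr M_h/A_h}(\{i\})$, whereas you import it from the proof of \Cref{lem:charging}.
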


\begin{proof}
First observe that
\begin{align*}
\E[\Util] &= \sum_{i \in N } \int_{0}^{1} \E[u_i(I(t),t)\mid T_i=t] dt \\
&\ge
       \int_0^1\E\!\left[\sum_{i\in N}u_i(I(t),t)\right]dt.
\end{align*}
For the equality, conditional on $T_i=t$, the indicator $X_i$ is independent of
$I(t)$. If $I(t)+i$ is feasible, the algorithm accepts $i$ exactly when it is active
and its value meets the threshold, so its conditional expected utility contribution is
$x_i\pos{v_i-\gamma(t)\Delta_i(I(t))}$. If $I(t)+i$ is infeasible, its contribution
is zero. These are precisely the two cases in the definition of $u_i(I(t),t)$.
The inequality applies \Cref{prop:conditioning}.

Fix a realization of $I(t)$ and let $\by$ be optimal as in
\eqref{eq:residual-value} for $I(t)$. For $i\in I(t)$, the constraint $y_i=0$
is imposed explicitly in \eqref{eq:residual-relaxation}. Write
$I_h(t):=I(t)\cap N_h$. If $i\in N_h\setminus I_h(t)$ is a loop in the local
contraction $\scr M_h/I_h(t)$, feasibility in its independent-set polytope gives
\[
    0\le y_i\le r_{\scr M_h/I_h(t)}(\{i\})=0.
\]
Thus this coordinate also has $y_i=0$. Since $I(t)$ is feasible, an element
$i\notin I(t)$ fails to extend $I(t)$ feasibly exactly when it is a loop in at
least one of these local contractions. Consequently, only coordinates with
$i\notin I(t)$ and $I(t)+i\in\scr F$ can have positive $y_i$.
For every such coordinate, $y_i\le x_i$, so
\begin{equation} \label{eq:monotone_coordinates}
  x_i\pos{v_i-\gamma(t)\Delta_i(I(t))}
    \ge y_i\bigl(v_i-\gamma(t)\Delta_i(I(t))\bigr).
\end{equation}
After summing and using the definition of $u_i(I(t),t)$, we can apply \eqref{eq:monotone_coordinates} followed by \Cref{lem:charging} to get
\[
   \sum_{i \in N} u_i(I(t),t) 
      \ge \bv\cdot\by-
          \gamma(t)\sum_{i\notin I(t):\,I(t)+i\in\scr{F}} y_i\Delta_i(I(t))
      \ge \bigl(1-k\gamma(t)\bigr)R(I(t)).
\]
Finally, taking expectations and integrating over $t$ proves the claim.
\end{proof}
We next compute the expected revenue:
\begin{lemma}[Revenue]
\label{lem:revenue}
\[
    \E[\Rev]
       =\gamma(0)R(\varnothing)+
          \int_0^1\gamma'(t)r(t)\,dt.
\]
\end{lemma}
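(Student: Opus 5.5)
We need to prove that $\E[\Rev]=\gamma(0)R(\varnothing)+\int_0^1\gamma'(t)r(t)\,dt$. The plan is the standard Lee--Singla revenue computation. Each accepted element contributes $\gamma(T_i)\bigl(R(I(T_i))-R(I(T_i)+i)\bigr)$. This is exactly the drop in the residual value $R(I(\cdot))$ at time $T_i$, weighted by $\gamma(T_i)$.

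\emph{Step 1: pathwise telescoping.} Fix a realization of clocks and activations. Since clocks are almost surely distinct, $R(I(t))$ is a step function of $t$. It starts at $R(I(0))=R(\varnothing)$ and is right-continuous in the following sense: for $s$ slightly larger than $T_i$, $I(s)=I(T_i)+i$ when $i$ is accepted. It is constant between acceptances. Hence
\[
\Rev=\sum_{i\in I}\gamma(T_i)\bigl(R(I(T_i))-R(I(T_i)+i)\bigr)=-\int_{[0,1]}\gamma(t)\,dR(I(t)),
\]
a Lebesgue--Stieltjes integral against a pure-jump, nonincreasing function. Each jump is nonnegative by \eqref{eq:residual-drop}.

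\emph{Step 2: integration by parts.} Apply Stieltjes integration by parts to $\gamma$, which is $C^1$, and the bounded-variation function $t\mapsto R(I(t))$:
\[
-\int_0^1\gamma\,dR(I(\cdot))=\gamma(0)R(I(0))-\gamma(1)R(I(1^{+}))+\int_0^1\gamma'(t)R(I(t))\,dt.
\]
Using $\gamma(1)=0$ from \eqref{eq:gamma-ode}, the boundary term at $1$ vanishes. The subtlety of left versus right limits only affects the boundary and a null set inside the $dt$ integral. To avoid Stieltjes technicalities, I would instead write $\gamma(T_i)=-\int_{T_i}^1\gamma'(s)\,ds$ and swap the sum and the integral. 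Then
\[
\Rev=-\int_0^1\gamma'(s)\sum_{i\in I,\,T_i<s}\Delta_i(I(T_i))\,ds=-\int_0^1\gamma'(s)\bigl(R(\varnothing)-R(I(s))\bigr)\,ds.
\]
Since $-\int_0^1\gamma'=\gamma(0)$, this equals $\gamma(0)R(\varnothing)+\int_0^1\gamma'(s)R(I(s))\,ds$. The inner telescoping uses that $I(s)$ is exactly the set of elements accepted before $s$, added one at a time in clock order.

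\emph{Step 3: expectations.} Take expectations and use Fubini, noting that $R$ is bounded by $\bv\cdot\bx$ and $\gamma'$ is bounded on $[0,1]$. This gives $\E[\Rev]=\gamma(0)R(\varnothing)+\int_0^1\gamma'(t)r(t)\,dt$. There is no real obstacle here. The only point needing care is the telescoping identity, which is cleanest in the $\gamma(T_i)=-\int_{T_i}^1\gamma'$ form, and the measure-zero event of tied clocks.
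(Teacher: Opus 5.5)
Your proof is correct and is essentially the paper's argument. The paper also writes $R(I(t))=R(\varnothing)-\sum_{j:t_j<t}D_j$ pathwise and integrates $\gamma'(t)R(I(t))$ term by term using $\gamma(1)=0$, which is your substitution $\gamma(T_i)=-\int_{T_i}^1\gamma'(s)\,ds$ read in the other direction, and it then takes expectations. (The Stieltjes detour is unnecessary, and $R(I(t))$ is in fact left-continuous since $I(t)$ counts acceptances strictly before $t$, but the elementary route you settle on is exactly right.)
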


\begin{proof}
If elements are accepted at times $t_1<\cdots<t_q$, denote the corresponding
residual drops by $D_1,\ldots,D_q$ (see \eqref{eq:residual-drop}). In this
case, we can write $R(I(t))=R(\varnothing)-\sum_{j:t_j<t}D_j$.
To spell out integration by parts for this step function, substitute this expression
and integrate each of its finitely many terms:
\begin{align*}
    \int_0^1\gamma'(t)R(I(t))\,dt
       &=R(\varnothing)\bigl(\gamma(1)-\gamma(0)\bigr)
         -\sum_{j=1}^qD_j\int_{t_j}^1\gamma'(t)\,dt\\
       &=-\gamma(0)R(\varnothing)+\sum_{j=1}^q\gamma(t_j)D_j.
\end{align*}
The values at the jump times do not affect the integral, and the last equality
uses $\gamma(1)=0$. Rearranging gives
\[
    \sum_{j=1}^q\gamma(t_j)D_j
      =\gamma(0)R(\varnothing)+
        \int_0^1\gamma'(t)R(I(t))\,dt.
\]
The left side is the realized revenue, and so the result follows after taking expectations.
\end{proof}

\begin{theorem}[Random-order ex-ante prophet inequality]
\label{thm:random-prophet}
For every $\bv\in\mb{R}_{\ge0}^N$, the accepted elements $I \subseteq N$ of \Cref{alg:prophet} satisfy
\[
    \E[v(I)]\ge
       \frac{1-e^{-k}}{k}\sum_{i\in N}x_iv_i.
\]
\end{theorem}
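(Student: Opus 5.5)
The plan is to combine the utility--revenue decomposition \eqref{eqn:utility_rev_decomp} with \Cref{lem:utility} and \Cref{lem:revenue}, and then let the differential equation \eqref{eq:gamma-ode} make the integral terms cancel. Taking expectations in \eqref{eqn:utility_rev_decomp},
\[
\E[v(I)]=\E[\Util]+\E[\Rev]\ge \int_0^1\bigl(1-k\gamma(t)\bigr)r(t)\,dt+\gamma(0)R(\varnothing)+\int_0^1\gamma'(t)r(t)\,dt.
\]
Merging the two integrals gives the integrand $\bigl(1-k\gamma(t)+\gamma'(t)\bigr)r(t)$. By \eqref{eq:gamma-ode} this is identically zero, so
\[
\E[v(I)]\ge\gamma(0)R(\varnothing)=\frac{1-e^{-k}}{k}R(\varnothing).
\]

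It remains to compare $R(\varnothing)$ with the ex-ante value. The relaxation $\scr P(\varnothing)$ is exactly $\scr P_\cap$ intersected with the nonnegative orthant, and $\bx\in\scr P_\cap$. So $\by=\bx$ is feasible for \eqref{eq:residual-value} with $A=\varnothing$, which gives $R(\varnothing)\ge\bv\cdot\bx=\sum_i x_iv_i$. In fact equality holds, since $\by\le\bx$ and $\bv\ge0$. Substituting finishes the bound.

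I should also check integrability and measurability, so that the interchange of expectation and integral in both lemmas is legitimate. This is immediate: $R(I(t))$ is a bounded step function of $t$ with finitely many jumps, bounded by $R(\varnothing)$, and $\gamma$ is smooth on $[0,1]$.

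There is no real obstacle, since all the substantive work is already in \Cref{lem:charging}, \Cref{lem:utility} and \Cref{lem:revenue}. The one point to state carefully is that the output $I$ is feasible: every acceptance in \Cref{alg:prophet} checks $I+i\in\scr F$. This shows the algorithm is a valid prophet secretary algorithm, and together with the reduction in \Cref{sec:ex-ante-benchmark} it gives \Cref{thm:ex-ante-secretary}.
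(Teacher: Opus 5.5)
Your proposal is correct and follows the paper's proof: you combine \Cref{lem:utility} and \Cref{lem:revenue}, the merged integrand $\bigl(1-k\gamma(t)+\gamma'(t)\bigr)r(t)$ vanishes by \eqref{eq:gamma-ode}, and $R(\varnothing)=\sum_i x_iv_i$ because $\bx$ is feasible and the constraint $\by\le\bx$ makes it optimal. Your added remarks on integrability and on the feasibility of the output are harmless extras.
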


\begin{proof}
By taking expectations over \eqref{eqn:utility_rev_decomp} and applying \Cref{lem:utility,lem:revenue},
\[
 \E[v(I)]\ge \gamma(0)R(\varnothing)+
   \int_0^1\bigl(1-k\gamma(t)+\gamma'(t)\bigr)r(t)\,dt.
\]
The integral vanishes by \eqref{eq:gamma-ode}.  At $A=\varnothing$, the vector
$\bx$ is feasible in \eqref{eq:residual-value}, and the constraint
$\by\le\bx$ makes it optimal.  Hence
$R(\varnothing)=\sum_i x_iv_i$, while
$\gamma(0)=(1-e^{-k})/k$.
\end{proof}


\section{Conclusion}

We obtain ex-ante prophet inequalities and contention resolution schemes for
$k$-matchoids in both fixed adversarial order and random order.  The respective
ex-ante competitive ratios and selectability guarantees are $\frac{1}{k+1}$ and $\frac{1-e^{-k}}{k}$.

For fixed order, the algorithm coordinates a weighted principal partition in
every local matroid through one common surplus vector.  The resulting local
prices add to a single threshold for each element, and their aggregate budget
is at most $\sum_i d(i)p_i\le k p(N)$.  The construction extends to random-element
inputs with mutually exclusive activations within independent batches; this extension is given in
\Cref{sec:random-element-ocrs}.

For random order, the algorithm prices an element by its effect on a residual
LP over the local contracted matroid polytopes.  Fractional exchanges are
constructed separately in the local matroids and then combined.  Each
coordinate participates in at most $k$ local exchanges, which yields the
ex-ante competitive ratio $\frac{1-e^{-k}}{k}$ through the Lee--Singla utility--revenue analysis.

\section{AI disclosure}

The proof of the adversarial order result for the $k$-intersection of matroids was discovered by ChatGPT 5.6 Pro after being encouraged to try using \cite{KP2026} to improve the best known results in this setting. The authors verified the correctness and originality of all content including references.

\bibliographystyle{amsalpha}
\bibliography{ref}

\appendix

\section{Extended Preliminaries} \label{sec:extended_prelim}

\subsection{OCRSs and RCRSs}
\label{sec:ocrs}

Fix a downward-closed feasible family $\scr F\subseteq2^N$ and an activation
vector $\bx\in[0,1]^N$.  Let $R(\bx)$ contain each $i\in N$ independently
with probability $x_i$.  We call the elements of $R(\bx)$ \emph{active}.

An online contention resolution scheme (OCRS) observes, one element at a
time, whether the current element is active.  It immediately and irrevocably
decides whether to accept each active element and always maintains an accepted
set $I\in\scr F$.  Thus $I\subseteq R(\bx)$.  In the fixed-order model used
here, the order is chosen before the activations and algorithmic randomness
are drawn and is known to the scheme \cite{feldman2015online,Lee2018}.

\begin{definition}[Selectability]
An OCRS is $\alpha$-selectable on input $\bx$ if
\[
    \bP[i\in I]\ge\alpha x_i
    \qquad\text{for every }i\in N.
\]
\end{definition}

A random-order contention resolution scheme (RCRS) receives the elements in
a uniformly random order independent of $R(\bx)$, and its selection
probabilities also average over this order \cite{Lee2018,adamczyk2018random}.
Equivalently, one can draw independent clocks
$T_i\sim\operatorname{Unif}[0,1]$, process the elements in increasing order
of $T_i$, and obtain guarantees by averaging over $(T_i)_{i \in N}$.
\subsection{Reductions from Lee and Singla \cite{Lee2018}}
\label{sec:lee-singla-reductions}

We reproduce Theorem~1.4 and Lemma~3.3 from Lee and Singla \cite{Lee2018} for convenience.

\begin{theorem}[Lee--Singla~\cite{Lee2018}]
\label{thm:lee-singla-ocrs}
For $0\leq \alpha\leq 1$, an $\alpha$-ex-ante prophet inequality over $\scr{F}$ in adversarial order (respectively, random order) implies
an $\alpha$-selectable OCRS (resp., RCRS) over $\scr F$.
\end{theorem}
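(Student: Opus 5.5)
The plan is to follow the LP-duality argument of Lee and Singla. We view the family of all OCRSs as a convex set and show that if no $\alpha$-selectable OCRS exists on input $\bx$, then some weighted Bernoulli prophet instance violates the $\alpha$-ex-ante guarantee.

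\textbf{Step 1: the set of achievable selection vectors.} Fix $\bx$ and the arrival order, which is adversarial but fixed in advance; in the random-order case we average over the uniform order. Let $K \subseteq \R^N_{\ge 0}$ be the set of vectors $(\bP[i\in I])_{i\in N}$ achievable by OCRSs, allowing randomized schemes. A randomized scheme is a mixture of deterministic decision trees. There are finitely many deterministic schemes: each maps the history of activations and decisions to an accept/reject choice. So $K$ is the convex hull of finitely many points, hence a compact convex polytope, and it is downward closed after we allow the scheme to randomly discard elements.

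\textbf{Step 2: separation.} Suppose $\alpha\bx \notin K$ (or not dominated by a point of $K$). By the separating hyperplane theorem, using downward-closedness to make the normal nonnegative, there are weights $\bv\ge 0$ with
\[
\max_{\bz\in K} \bv\cdot \bz < \alpha\, \bv\cdot\bx .
\]
The left side equals the optimal expected reward of an online algorithm on the Bernoulli instance $W_i=v_i\Ber(x_i)$. This holds because an online algorithm on that instance is exactly an OCRS, and its expected reward $\sum_i v_i\bP[i\in I]$ is linear in the selection vector.

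\textbf{Step 3: contradiction.} For this instance the ex-ante benchmark is at least $\sum_i R_i(x_i)=\sum_i v_i x_i$, since $\bx$ is feasible and $R_i(x_i)=v_i x_i$. The assumed $\alpha$-ex-ante prophet inequality therefore gives an online algorithm whose reward is at least $\alpha\,\bv\cdot\bx$. This contradicts Step 2, so $\alpha\bx\in K$ and an $\alpha$-selectable scheme exists. In the random-order case the same argument runs with $K$ defined by averaging over uniformly random orders; the algorithm may use the order as it is revealed, and the reward is still linear in the averaged selection vector.

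\textbf{Main obstacle.} The delicate point is ensuring that $K$ is convex and closed and that the separating normal is nonnegative. I will handle this by allowing randomization over deterministic schemes, which gives convexity, and by allowing free disposal. Free disposal makes $K$ downward closed in $[0,1]^N$, and a downward-closed set can only be separated from a point by a nonnegative normal.
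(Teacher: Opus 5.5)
Your proposal is correct and takes the same route the paper relies on. The paper gives no proof of its own: it cites Lee--Singla's policy-LP argument, which it describes as finite-dimensional separation of achievable acceptance vectors. That is exactly your argument: separate $\alpha\bx$ from the downward-closed polytope of selection vectors of deterministic online policies by a nonnegative $\bv$, then apply the ex-ante prophet inequality to $W_i=v_i\cdot\Ber(x_i)$.

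Two details are worth making explicit. First, free disposal should ``pretend'' to accept a discarded element, so that later decisions are unchanged. Second, an online algorithm (which may accept inactive zero-weight elements) becomes an OCRS by simulating it and keeping only its active accepted elements. This preserves feasibility by downward-closedness of $\scr F$ and yields a selection vector $\bz$ with $\bv\cdot\bz$ equal to the algorithm's reward.
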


\begin{lemma}[Lee--Singla~\cite{Lee2018}]
\label{lem:bernoulli-reduction}
If there exists an $\alpha$-ex-ante prophet inequality for
weighted Bernoulli random variables, then there exists an
$\alpha$-ex-ante prophet inequality for generally distributed
independent random values. The same holds for random-order arrivals.
\end{lemma}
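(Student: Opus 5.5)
The plan is to reduce from general independent distributions to the weighted Bernoulli case by replacing each $W_i$ with a Bernoulli surrogate that lies in its top quantile. Fix distributions $\mathcal D=(D_i)$ and let $\bx^*\in\scr P$ attain the maximum in \eqref{eq:ex-ante-benchmark}, so that $\EA(\mathcal D)=\sum_i R_i(x^*_i)$. For each $i$, set
\[
v_i:=R_i(x^*_i)/x^*_i
\]
whenever $x^*_i>0$; elements with $x^*_i=0$ are discarded. We then run the assumed $\alpha$-ex-ante algorithm $\mathcal A$ on the Bernoulli instance $W'_i=v_i\cdot\Ber(x^*_i)$. This instance is feasible, since $\bx^*\in\scr P$, and its ex-ante value is at least $\sum_i v_ix^*_i=\EA(\mathcal D)$. (Indeed, for downward-closed $\scr P$ the Bernoulli ex-ante value is exactly $\sum_i v_i x^*_i$.)

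To simulate $\mathcal A$ online on the real values, we use quantile coupling. When $W_i$ arrives, draw an independent $U\sim\mathrm{Unif}[0,1]$ to break ties at atoms, and compute the randomized quantile $q_i$ of $W_i$, which is uniform on $[0,1]$. Declare $i$ active iff $q_i\ge 1-x^*_i$. This event has probability $x^*_i$ and is independent across $i$. We feed the activity bit to $\mathcal A$ and accept $i$ exactly when $\mathcal A$ does.

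The key identity is
\[
\E[W_i\mid i\text{ active}]=R_i(x^*_i)/x^*_i=v_i.
\]
Moreover, conditional on activity, $W_i$ is independent of everything $\mathcal A$ has observed, because $\mathcal A$ sees only activity bits together with its own randomness. Hence
\[
\E[W_i\one_{i\in I}]=v_i\bP(i\in I),
\]
and summing over $i$ shows that the expected reward equals $\mathcal A$'s expected reward on the Bernoulli instance. That reward is at least $\alpha\sum_i v_ix^*_i=\alpha\,\EA(\mathcal D)$.

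The random-order case is identical: the coupling is done element by element, so the arrival order (adversarial or uniformly random) is left untouched.

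The main obstacle is the conditional independence claim. One must check that the decision to accept $i$ depends only on $X_i$ and on past information that is independent of the value of $W_i$ given $X_i=1$. This holds because the $W_i$ are independent and the auxiliary tie-breaking variables are fresh for each element. Atoms are handled by the randomized quantile, which is exactly how $R_i$ is defined.
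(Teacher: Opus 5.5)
Your proof is correct and is essentially the quantile-coupling argument of Lee and Singla that the paper cites for this lemma; the paper gives no proof of its own beyond that citation. The one difference is that you threshold directly at the ex-ante optimizer $\bx^*\in\scr P$, so the surrogate activation vector lies in $\scr P$ automatically and the downsampling step in the paper's accompanying remark is not needed.
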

\begin{remark}
    The lemma in \cite{Lee2018} is proven for weighted Bernoulli random variables of the form
    $W_i = v_i \cdot \Ber(p_i)$ for each $i \in N$, where $v_i \ge 0$. They do not explicitly state that
    one can assume $(p_i)_{i \in N}$ lies in the polytope relaxation of the feasibility constraint. However, for such an input, the optimal solution to the ex-ante benchmark \eqref{eq:ex-ante-benchmark}, say $\bx = (x_i)_{i \in N}$, can be chosen to satisfy $x_i \le p_i$
    for all $i \in N$. Thus, one can always downsample by $x_i/p_i$ when processing $i$ with $p_i>0$, 
    and recover an input with weighted Bernoulli random variables of the form $v_i \cdot \Ber(x_i)$ for each $i \in N$.
\end{remark}

The policy-LP proof of \Cref{thm:lee-singla-ocrs} is pointwise in the
activation vector.  We use the following form from
\cite[Section~2.1]{Lee2018}.

\begin{corollary}
\label{cor:weighted-reduction}
Fix $\bx\in[0,1]^N$ and an arrival model.  Suppose that, for every
$\bv\in\R^N_{\ge0}$, there is a feasible online policy whose accepted elements $I$ satisfy
\[
    \E\!\left[\sum_{i\in I}v_i\right]
       \ge\alpha\sum_{i\in N}x_iv_i.
\]
Then there exists an $\alpha$-selectable contention resolution scheme on input
$\bx$ in the same arrival model.
\end{corollary}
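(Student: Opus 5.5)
The goal is to prove Corollary~\ref{cor:weighted-reduction} by the policy-LP duality argument of Lee and Singla, carried out pointwise in the fixed activation vector $\bx$.

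\textbf{Step 1: the primal LP.} Fix the arrival model and $\bx$. The state of an online scheme consists of the arrival history, which records which elements arrived and their activity bits, together with its internal randomness. A randomized scheme can therefore be encoded as a feasible point of a finite linear program. Its variables are the probabilities of accepting each element in each history, and they are constrained by consistency with the activation distribution $R(\bx)$, by the arrival distribution (the fixed order, or averaging over uniform permutations in the random-order case), and by feasibility of the accepted set in $\scr F$. Because the histories form a finite tree, the set of achievable selection vectors
\[
\cI(\bx):=\bigl\{(\bP[i\in I])_{i\in N}: I \text{ output by some feasible online scheme on input }\bx\bigr\}
\]
is a polytope: it is a linear image of the policy polytope. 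It is also closed under decreasing coordinates, since one can always reject more.

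\textbf{Step 2: separation.} Suppose no $\alpha$-selectable scheme exists. Then the vector $\alpha\bx$ is not dominated by any point of the convex compact set $\cI(\bx)$. The set $\cI(\bx)+\R^N_{\le 0}$ is convex and closed, and it does not contain $\alpha\bx$. By the separating hyperplane theorem there is a vector $\bv$ with
\[
\bv\cdot(\alpha\bx)>\max_{\bz\in\cI(\bx)}\bv\cdot\bz.
\]
Downward closure forces $\bv\ge0$; otherwise the right-hand side would be $+\infty$. By linearity of expectation, $\bv\cdot\bz=\E[\sum_{i\in I}v_i]$ for the scheme realizing $\bz$. So every feasible online policy earns strictly less than $\alpha\sum_i x_iv_i$ on this $\bv$. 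This contradicts the hypothesis, so an $\alpha$-selectable scheme exists.

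\textbf{Step 3: matching the reward model.} In the hypothesis, the policy knows $\bv$ and sees $W_i=v_iX_i$. This is the same information as a contention resolution scheme that has $\bv$ hard-coded, because the scheme observes $X_i$ and can compute the reward itself. Hence the policies of the hypothesis belong to the policy class of Step~1, and the contradiction in Step~2 is valid.

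The main obstacle is Step~1: confirming that the class of online schemes really forms a compact convex set whose selection vectors are linear in the policy variables. In the random-order model, the histories must include the realized order, and both the order and the activations are exogenous, with probabilities fixed independently of the policy. Convexity then follows from randomizing between two schemes at the start, and compactness from finiteness of the history tree. No property of matchoids is needed, which is why the statement holds for any downward-closed $\scr F$.
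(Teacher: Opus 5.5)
Your proposal is correct and is essentially the paper's argument. The paper cites Lee and Singla's policy-LP duality, which it describes as finite-dimensional separation of the achievable acceptance vectors from $\alpha\bx$, applied pointwise at the fixed $\bx$ exactly as in your Steps 1--2; your separation should be against $\cI(\bx)+\R^N_{\le 0}$, where the supremum forces $\bv\ge 0$ (not the max over $\cI(\bx)$ as displayed), and ``reject more'' only preserves the other coordinates if the scheme keeps simulating the original state, though your argument never needs downward closure of $\cI(\bx)$ itself.
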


These reductions are algorithmic under the usual oracle assumptions.  If a
weighted Bernoulli algorithm returns a policy in polynomial time and its
selection probabilities can be evaluated in polynomial time, the ellipsoid
argument of Lee and Singla computes, for every $\varepsilon>0$, a
polynomial-size distribution over deterministic online policies whose mixture
is $(\alpha-\varepsilon)$-selectable on input $\bx$
\cite[Theorem~1.4 and Section~2.2]{Lee2018}.  Their upper-quantile reduction is
efficient when the ex-ante program, the relevant quantiles, and the
conditional tail means can be computed efficiently
\cite[Lemma~3.3]{Lee2018}.  The weighted policies in this paper are
polynomial-time in the standard independence-oracle model: the convex price
programs use maximum-weight base and subgradient oracles, and the residual LP
in \eqref{eq:residual-value} has a separation oracle obtained by testing the
local contracted matroid polytopes \cite{Edmonds1970,GLS1981}.  The existential
reductions preserve the stated ex-ante competitive ratios and selectability guarantees exactly. Their polynomial-time
implementations incur an arbitrarily small additive loss.

\section{Adversarial-Order Random-Element OCRS}
\label{sec:random-element-ocrs}

This appendix records that the quadratic-price argument extends simultaneously to
local matroid constraints and to the random-element OCRS model of
\citet{MaMZ26Network}, in which at most one element is active in each batch.
We use $\bx$ for the activation marginals, reserving $\bp$ for the common-surplus
vector as in the adversarial-order proof.

Fix a $k$-matchoid with local matroids $(\scr{M}_h)_{h\in H}$ on $N$, and partition
the ground set into batches
\begin{equation}
    N=B_1\sqcup\cdots\sqcup B_T.
    \label{eq:batch-partition}
\end{equation}
Let $\bx\in\scr{P}_{\cap}$ satisfy $x(B_b)\le 1$ for every $b$.
For each batch $B_b$, at most one element is active, with
\begin{align}
    \bP\bigl[X_i=1,\ X_j=0\text{ for all }j\in B_b\setminus\{i\}\bigr]
        &=x_i \qquad (i\in B_b),
        \label{eq:random-element-marginals}\\
    \bP\bigl[X_j=0\text{ for all }j\in B_b\bigr]
        &=1-x(B_b).
        \label{eq:random-element-none}
\end{align}
The indicator vectors belonging to distinct batches are mutually independent.
The batches are revealed in the fixed order $B_1,\ldots,B_T$; when $B_b$
arrives, all of its activation indicators are revealed simultaneously.  A
random-element OCRS accepts only active elements, decides on each batch before
seeing any later batch, and always maintains a set in $\scr F$.  We use
$\alpha$-selectable in the sense of \Cref{sec:ocrs}, under the activation law
above.

\begin{theorem}[Random-element OCRS for matchoids]
\label{thm:random-element-matchoid}
For every $k$-matchoid, every batch partition \eqref{eq:batch-partition},
every $\bx\in\scr{P}_{\cap}$ with $x(B_b)\le1$ for all $b$, and every
corresponding random-element input
\eqref{eq:random-element-marginals}--\eqref{eq:random-element-none}, there
exists an $\alpha_k$-selectable random-element OCRS on this input, where
\[
    \alpha_k=\frac{1}{k+1}.
\]
We prove this through the following weighted guarantee.  For every $\bv\in\R^N_{\ge0}$ there are fixed thresholds
$\btau\in\R^N_{\ge0}$ and fixed local strengthenings
$\widehat{\scr{M}}_h\preceq\scr{M}_h$ such that the following almost
non-adaptive rule is feasible.  When a batch is revealed, if $i$ is its unique
active element, accept $i$ exactly when
\[
    v_i>\tau_i,
    \qquad
    (I\cup\{i\})\cap N_h\in\cI(\widehat{\scr{M}}_h)
        \quad\text{for every }h\in H(i).
\]
For every fixed order of the batches, the resulting set $I$ satisfies
\begin{equation}
    \E[v(I)]\ge \frac{1}{k+1}\sum_{i\in N}x_iv_i.
    \label{eq:random-element-weighted-guarantee}
\end{equation}
\end{theorem}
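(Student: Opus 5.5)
We follow the adversarial-order argument of \Cref{sec:adversarial_detailed}, replacing the $k$ matroids on a common ground set by the local matroids $(\scr M_h)_{h\in H}$. Elements with $x_i=0$ are first deleted (made loops with $\tau_i:=v_i$), so every remaining element is a nonloop in every $\scr M_h$ with $h\in H(i)$. The prices now come from the potential
\[
\Phi\bigl((\bt_h)_{h\in H}\bigr):=\sum_{h\in H}\Gamma_{\scr M_h}(\bt_h)+\frac12\sum_{i\in N}x_i\Bigl(v_i-\sum_{h\in H(i)}t_{h,i}\Bigr)^2,
\]
minimized over $\bt_h\in\R^{N_h}_{\ge0}$, and we set $\tau_i:=\sum_{h\in H(i)}t_{h,i}$ and $\bp:=\bx\odot(\bv-\btau)$. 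The proof of \Cref{prop:equilibrium} should go through verbatim, since each block $\bt_h$ has the same first-order condition. Coercivity follows from $\Gamma_{\scr M_h}(\bt_h)\ge\frac12\|\bt_h\|_\infty^2$. The normal-cone analysis gives $\bp\ge0$: if $p_i<0$, then $t_{h,i}=0$ for all $h\in H(i)$, so $p_i=x_iv_i\ge 0$, a contradiction. It also gives $\restr{\bp}{N_h}\in\partial\Gamma_{\scr M_h}(\bt_h)$ for every $h$. \Cref{lem:principal-blocks} then yields the strengthenings $\widehat{\scr M}_h:=\widehat{\scr M_h}(\bt_h)\preceq\scr M_h$, the block density bounds \eqref{eq:block-density}, and the budget $\bt_h\cdot\restr{\bx}{N_h}\le p(N_h)$, using $\restr{\bx}{N_h}\in P(\scr M_h)$.

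Next we verify feasibility and the upper bound. Feasibility is immediate: $I\cap N_h$ stays independent in $\widehat{\scr M}_h\preceq\scr M_h$ for every $h$, so $I\in\scr F$. Summing the budgets,
\[
\sum_i x_iv_i=p(N)+\sum_h\bt_h\cdot\bx\le p(N)+\sum_h p(N_h)=\sum_i(1+d(i))p_i\le(k+1)P .
\]

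For the lower bound, set $Q=\{i:p_i>0\}=\{i:v_i>\tau_i\}$. Call $i\in Q$ \emph{available} if, at the moment its batch $B_b$ is revealed, $(I_{<b}\cup\{i\})\cap N_h\in\cI(\widehat{\scr M}_h)$ for all $h\in H(i)$, where $I_{<b}$ is the set accepted from earlier batches. Since at most one element per batch is active, nothing in $B_b$ other than $i$ can be accepted before $i$'s decision, so availability is determined by $I_{<b}$ alone. This is where the batch model enters. Availability therefore depends only on earlier batches, which are independent of $B_b$, and so $\bP(i\in I)=x_i\bP(\text{$i$ available})$ by \eqref{eq:random-element-marginals}.

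The blocked-surplus inequality $p(C)\le\tau(I)$ holds pathwise by the same closure argument as before. Assign each unavailable $i\in C$ to one blocking pair $(h,j)$, with $i\in U_{h,j}$. Monotonicity of closure gives $C_{h,j}\subseteq\cl_{\cN_{h,j}}(I\cap U_{h,j})$, and \eqref{eq:block-density} gives $p(C_{h,j})\le\alpha_{h,j}|I\cap U_{h,j}|$. Summing over $(h,j)$ and using that $i\in U_{h,j}$ only when $h\in H(i)$, we get $\sum_h t_h(I\cap N_h)=\tau(I)$. Then
\[
\E[v(I)]=\sum_{i\in Q}(p_i+x_i\tau_i)\bP(A_i)=P-\E[p(C)]+\E[\tau(I)]\ge P\ge\tfrac{1}{k+1}\sum_ix_iv_i .
\]
This is the weighted guarantee \eqref{eq:random-element-weighted-guarantee}.

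Finally, selectability follows by the policy-LP duality of \Cref{cor:weighted-reduction}. Its proof is pointwise in the activation law and only uses that the policy is an online policy over batches, so it extends to the batched model. We expect this to be the main point needing care. The Lee--Singla LP must be rewritten with the batch distribution (one active element per batch) in place of independent Bernoullis, and we would check that the dual argument uses only linearity in $\bv$ and the convexity of the set of achievable selection-probability vectors of online policies. Both remain true for batched inputs.
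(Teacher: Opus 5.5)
Your proposal is correct and follows the paper's proof essentially step for step. It uses the same local potential $\Phi$ over $\prod_h \R^{N_h}_{\ge0}$, the same first-order argument giving a common surplus $\bp\ge0$ with $\restr{\bp}{N_h}\in\partial\Gamma_{\scr M_h}(\bt_h)$, the same pre-batch availability and closure-based bound $p(C)\le\tau(I)$, and the same budget $\bx\cdot\btau\le\sum_i d(i)p_i\le kP$; the paper likewise obtains selectability by noting that the separation argument behind \Cref{cor:weighted-reduction} never uses coordinatewise independence, which is exactly the point you flag.
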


\begin{proof}
Fix $\bv\ge0$.  We first prove the weighted guarantee.  Delete the elements
with $x_i=0$, restrict every local matroid and every batch accordingly, and
restore the deleted elements as ineligible loops at the end.  Thus $x_i>0$
for every remaining element.  Since $\restr{\bx}{N_h}\in P(\scr{M}_h)$,
every element of $N_h$ is then a nonloop of $\scr{M}_h$.

For each $h\in H$, introduce a local price vector
$\bt_h\in\R^{N_h}_{\ge0}$.  Its contribution to element $i$ is defined only
when $h\in H(i)$, and the aggregate price is
\begin{equation}
    \tau_i:=\sum_{h\in H(i)}t_{h,i},
    \qquad
    \btau:=(\tau_i:i\in N).
    \label{eq:local-aggregate-price}
\end{equation}
Consider the convex program
\begin{equation}
\begin{split}
    \Phi\bigl((\bt_h)_{h\in H}\bigr)
       :=\sum_{h\in H}\Gamma_{\scr{M}_h}(\bt_h)
       +\frac12\sum_{i\in N}x_i(v_i-\tau_i)^2
       \quad\text{over }\prod_{h\in H}\R^{N_h}_{\ge0}.
\end{split}
    \label{eq:local-joint-program}
\end{equation}
As every local matroid is loopless on the remaining ground set,
$\Gamma_{\scr{M}_h}(\bt_h)\ge\frac12\|\bt_h\|_\infty^2$.
Hence the objective is coercive and has a minimizer.  Fix one and define
\begin{equation}
    \bp:=\bx\odot(\bv-\btau).
    \label{eq:local-surplus}
\end{equation}
The constrained first-order condition in the block $\bt_h$ is
\[
    0\in
    \partial\Gamma_{\scr{M}_h}(\bt_h)
       -\restr{\bp}{N_h}
       +N_{\R^{N_h}_{\ge0}}(\bt_h).
\]
Consequently, there is a vector
$\bg_h\in\partial\Gamma_{\scr{M}_h}(\bt_h)$ such that, for every
$i\in N_h$,
\begin{equation}
    g_{h,i}=p_i\quad\text{if }t_{h,i}>0,
    \qquad
    g_{h,i}\ge p_i\quad\text{if }t_{h,i}=0.
    \label{eq:local-boundary-optimality}
\end{equation}
By \eqref{eq:subdiff}, $\bg_h=\bt_h\odot\by_h$ for some
$\by_h\in B(\scr{M}_h)$, so $\bg_h\ge0$ and $g_{h,i}=0$ whenever
$t_{h,i}=0$.

Suppose that $p_i<0$.  The first relation in
\eqref{eq:local-boundary-optimality} rules out $t_{h,i}>0$ for every
$h\in H(i)$.  Thus $\tau_i=0$ and $p_i=x_iv_i\ge0$, a contradiction.
Therefore $\bp\ge0$.  If $t_{h,i}=0$, the second relation in
\eqref{eq:local-boundary-optimality} now gives
$0=g_{h,i}\ge p_i\ge0$.  Hence
\begin{equation}
    \restr{\bp}{N_h}\in
       \partial\Gamma_{\scr{M}_h}(\bt_h)
       \qquad(h\in H).
    \label{eq:local-common-subgradient}
\end{equation}
In particular, $\btau\le\bv$ and
\begin{equation}
    x_iv_i=p_i+x_i\tau_i
    \qquad(i\in N).
    \label{eq:local-exact-decomposition}
\end{equation}

Apply \Cref{lem:principal-blocks} separately to
$(\scr{M}_h,\bt_h,\restr{\bp}{N_h})$.  Let
$U_{h,1},\ldots,U_{h,\ell_h}$ be the price-level blocks, with prices
$\alpha_{h,1}>\cdots>\alpha_{h,\ell_h}\ge0$, and put
\[
    A_{h,0}:=\varnothing,
    \qquad
    A_{h,j}:=U_{h,1}\cup\cdots\cup U_{h,j}.
\]
Let
\[
    \scr{N}_{h,j}
       :=(\scr{M}_h/A_{h,j-1})|_{U_{h,j}},
    \qquad
    \widehat{\scr{M}}_h
       :=\scr{N}_{h,1}\oplus\cdots\oplus\scr{N}_{h,\ell_h}.
\]
We obtain
\begin{align}
    \widehat{\scr{M}}_h&\preceq\scr{M}_h,
        \label{eq:local-strengthening}\\
    p(S)&\le\alpha_{h,j} r_{\scr{N}_{h,j}}(S)
        \qquad(S\subseteq U_{h,j}),
        \label{eq:local-block-domination}\\
    \restr{\bx}{N_h}\cdot\bt_h&\le p(N_h).
        \label{eq:local-price-budget}
\end{align}

Put
\[
    P:=p(N),
    \qquad
    Q:=\{i\in N:p_i>0\}=\{i\in N:v_i>\tau_i\}.
\]
The online rule in the statement uses the strengthenings
$\widehat{\scr{M}}_h$ and accepts the unique active element of a batch exactly
when it belongs to $Q$ and is feasible in every relevant strengthening.
By \eqref{eq:local-strengthening}, the accepted set is in $\scr{F}$.

For $i\in Q\cap B_b$, call $i$ available when
\begin{equation}
    (I_{<b}\cup\{i\})\cap N_h
       \in\cI(\widehat{\scr{M}}_h)
       \qquad\text{for every }h\in H(i),
    \label{eq:prebatch-availability}
\end{equation}
where $I_{<b}$ is the set accepted before batch $B_b$ is revealed.
Availability is defined without regard to which element of $B_b$ is active.
Let
\[
    C:=\{i\in Q:i\text{ is unavailable at the start of its batch}\}.
\]
We claim that, for every realization of the batch outcomes,
\begin{equation}
    p(C)\le\tau(I).
    \label{eq:batched-blocked-surplus}
\end{equation}
Indeed, assign each $i\in C$ to one local matroid $h\in H(i)$ in which
\eqref{eq:prebatch-availability} fails, and then to the unique block
$U_{h,j}$ containing $i$.  Write $C_{h,j}$ for the set assigned to $(h,j)$.
The direct-sum definition gives
\[
    i\in\cl_{\scr{N}_{h,j}}(I_{<b}\cap U_{h,j})
    \qquad(i\in C_{h,j}).
\]
By monotonicity of closure,
$C_{h,j}\subseteq\cl_{\scr{N}_{h,j}}(I\cap U_{h,j})$.  Since
$I\cap U_{h,j}$ is independent in $\scr{N}_{h,j}$,
\[
    r_{\scr{N}_{h,j}}(C_{h,j})\le |I\cap U_{h,j}|.
\]
Using \eqref{eq:local-block-domination} and summing over all assigned pairs,
\begin{align*}
    p(C)
       &\le\sum_{h\in H}\sum_{j=1}^{\ell_h}
              \alpha_{h,j}|I\cap U_{h,j}|\\
       &=\sum_{h\in H}t_h(I\cap N_h)
        =\sum_{i\in I}\sum_{h\in H(i)}t_{h,i}
        =\tau(I),
\end{align*}
which proves \eqref{eq:batched-blocked-surplus}.

For $i\in Q$, let $A_i$ be the event that $i$ is available at the
start of its batch.  This event depends only on earlier batch outcomes and on
preprocessing randomness, and is therefore independent of the entire outcome
of the batch containing $i$.  Because at most one element in that batch is
active, the rule accepts $i$ exactly on
$A_i\cap\{X_i=1\}$.  Hence
\begin{equation}
    \bP[i\in I]=x_i\bP[A_i].
    \label{eq:batched-factorization}
\end{equation}
Using \eqref{eq:local-exact-decomposition},
\begin{align}
    \E[v(I)]
       &=\sum_{i\in Q}(p_i+x_i\tau_i)\bP[A_i]\notag\\
       &=P-\E[p(C)]+\E[\tau(I)] \ge P,
    \label{eq:batched-earns-surplus}
\end{align}
where the last inequality is \eqref{eq:batched-blocked-surplus}.

It remains to compare $P$ with the ex-ante value.  Summing the local price
budgets \eqref{eq:local-price-budget} and regrouping by elements gives
\begin{align}
    \bx\cdot\btau
       &=\sum_{h\in H}\restr{\bx}{N_h}\cdot\bt_h\notag\\
       &\le\sum_{h\in H}p(N_h)
        =\sum_{i\in N}d(i)p_i
        \le kP.
    \label{eq:matchoid-price-budget}
\end{align}
Together with \eqref{eq:local-exact-decomposition}, this yields
\[
    \sum_{i\in N}x_iv_i=P+\bx\cdot\btau\le(k+1)P.
\]
Combining this inequality with \eqref{eq:batched-earns-surplus} proves
\eqref{eq:random-element-weighted-guarantee}.

The weighted guarantee holds for every $\bv\ge0$.  The proof of
\Cref{cor:weighted-reduction} is finite-dimensional separation of achievable
acceptance vectors and does not use coordinatewise independence.  Applying it
to the class of online policies for the fixed random-element input gives an
$\alpha_k$-selectable random-element OCRS on this input.
\end{proof}

Taking singleton batches $B_b=\{i_b\}$ for any fixed arrival order
$i_1,\ldots,i_n$ makes the batch outcomes mutually independent Bernoulli
indicators with marginals $\bx$.  The weighted guarantee in
\Cref{thm:random-element-matchoid} then gives the Bernoulli case of
\Cref{thm:ex-ante-adversarial}.  Applying
\Cref{lem:bernoulli-reduction} gives the general ex-ante prophet inequality, and applying
\Cref{cor:weighted-reduction} gives \Cref{cor:ocrs}.

\subsection{Tightness}
\label{sec:random-element-tightness}

The random-element model makes the selectability guarantee in
\Cref{thm:random-element-matchoid} best possible when $k$ is a prime power. This is
the finite-affine-plane obstruction of \cite{MaMZ26Network}, translated into
$k$-matchoid language.

\begin{proposition}[Tightness for random-element OCRS]
\label{prop:random-element-tightness}
If $k \ge 2$ is a prime power, then no random-element OCRS for
$k$-matchoids is $\alpha$-selectable on every input for any
$\alpha>1/(k+1)$.  The same instances show that no online algorithm is better than
$1/(1+k)$-ex-ante competitive. 
\end{proposition}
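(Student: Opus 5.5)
We will exhibit the finite affine plane instance of \cite{MaMZ26Network} as a $k$-matchoid with singleton-batch-compatible random-element structure, and show that every online policy earns at most $(\frac{1}{k+1}+o(1))\sum_i x_iv_i$ on some weighting, after which \Cref{cor:weighted-reduction} (read contrapositively) rules out $\alpha$-selectability for $\alpha>1/(k+1)$.

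\textbf{Construction.} Let $q=k$ be a prime power and consider the affine plane $AG(2,q)$. It has $q^2$ points and $q^2+q$ lines, which fall into $q+1$ parallel classes of $q$ lines each. The elements will be the lines, grouped into batches. The constraint is that at most one accepted line may pass through each point, i.e.\ the accepted lines are pairwise disjoint. For each point $h$ we take a local matroid $\scr M_h$ equal to the rank-one uniform matroid on the lines through $h$. Since each line contains $q=k$ points, every element lies in exactly $k$ local matroids, so this is a $k$-matchoid (indeed a $k$-bounded hypergraph matching). Note that two lines are compatible exactly when they are parallel.

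The batch order in \cite{MaMZ26Network} reveals, roughly, one parallel class at a time, repeated with shrinking activation mass and growing values. I would take that template directly: batches are copies of parallel classes, or of single lines, scaled so that $\bx$ sums to at most $1$ on every point. Then $\restr{\bx}{N_h}\in P(\scr M_h)$ for every $h$, so $\bx\in\scr P_\cap$, and $x(B_b)\le 1$ on every batch.

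\textbf{Upper bound on online value.} Assign values growing geometrically along the arrival order. The online algorithm then faces a tradeoff. If it accepts an early line, that line blocks $k$ later lines through each of its points. If it waits, the active mass drains away. I would write the algorithm's expected reward as the value of an LP over acceptance probabilities and bound it by a dual solution, or by a direct recursive inequality of the form \emph{selection probability of later lines $\le$ mass unblocked by earlier accepted lines}. The goal is to show the optimal online value tends to $\frac{1}{k+1}\sum x_iv_i$ as the parameters (number of rounds, value ratio) go to their limits. This calculation is exactly the one in \cite{MaMZ26Network}, and I would import it verbatim, since it depends only on the hypergraph and the activation law, both of which are unchanged by the matchoid relabeling.

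\textbf{Conclusion.} With the policy-value bound in hand, the two claims follow. First, no online algorithm beats $1/(k+1)$ ex-ante on these Bernoulli-value instances: the benchmark is $\sum x_iv_i$ because $\bx\in\scr P_\cap$. Second, if an $\alpha$-selectable random-element OCRS existed, then linearity of expectation would give reward $\alpha\sum x_iv_i$ for every $\bv$, which is a contradiction. The main obstacle is the middle step. I need to verify that the affine-plane activation law of \cite{MaMZ26Network} really uses batches with $x(B_b)\le 1$ and that its LP-duality upper bound transfers exactly. If the transfer needs adjustment, I would redo their dual certificate by charging each line's selection to the $k$ points it covers, using the parallel-class structure. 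The parallel-class structure is what makes the $k$ lines blocked through distinct points pairwise incompatible, and that incompatibility is what drives the factor $k+1$.
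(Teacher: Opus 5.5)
Your route is the paper's: its proof is exactly the relabeling you describe (lines of the affine plane of order $k$ as elements, a rank-one local matroid on the lines through each point, so each line lies in exactly $k$ local matroids), followed by an appeal to \cite{MaMZ26Network}. The gap is that you leave open, and partly misdescribe, the one feature the bound depends on. The batches must be exactly the $k+1$ parallel classes, each revealed once. Your alternatives are batches that are single lines (the ``singleton-batch-compatible'' structure) or repeated copies of classes. Both let an online policy accept several pairwise parallel lines, so the bound no longer follows; the tightness here is specific to the random-element model. Your fallback heuristic is also not right as stated. The $k$ lines of any other parallel class pass through the $k$ distinct points of an accepted line and are pairwise \emph{compatible}, so a charging argument resting on their incompatibility cannot work.

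With the right batches, the step you flag as the main obstacle is immediate and needs no LP duality. You already noted that two lines are compatible iff they are parallel. Since each batch has at most one active line, every online policy therefore accepts at most one line in total. The instance becomes a single-item stopping problem, while $\sum_i x_iv_i$ lets each of the $k+1$ classes contribute about one unit. For instance, give each line of classes $1,\ldots,k$ mass $(1-\epsilon)/k$ and value $1$, and each line of class $k+1$ mass $\epsilon\le 1/k$ and value $1/(k\epsilon)$. Every point then has load exactly $1$, so $\bx\in\scr P_\cap$; every batch has mass at most $1$; and $\sum_i x_iv_i=k(1-\epsilon)+1$. Let $E$ be the event that some line is accepted in the first $k$ batches. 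The last batch is independent of $E$, so any policy earns at most $\bP[E]+(1-\bP[E])\cdot k\epsilon\cdot\frac{1}{k\epsilon}=1$. Letting $\epsilon\to0$ gives the ex-ante bound, and your linearity-of-expectation step then gives the selectability bound. That step is the easy direction (an $\alpha$-selectable scheme yields the weighted guarantee). It is not \Cref{cor:weighted-reduction} read contrapositively, which goes the other way.
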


\begin{proof}
The input is the finite-affine-plane
instance of \cite[Theorem~3.4, Definition~3.5, and Appendix~B.2 in the full
version]{MaMZ26Network}.
To translate its network-revenue-management
configuration into matchoid language, take the lines of the affine plane as
elements, its parallel classes as the batches, and, for every point
$u$, impose a rank-one local matroid on the lines incident to $u$.  Each
line lies in exactly $k$ local matroids.  Under this identification, both
the selectability bound and the weighted ex-ante bound are precisely as claimed.
\end{proof}

\end{document}